\documentclass[final,1p,times,authoryear]{elsarticle}
\pdfoutput=1

\usepackage{comment}

\makeatletter
\def\ps@pprintTitle{\let\@oddhead\@empty
     \let\@evenhead\@empty
     \def\@oddfoot{}
     \let\@evenfoot\@oddfoot}
\makeatother

\usepackage{hyperref}
\usepackage{natbib}
\usepackage[svgnames]{xcolor}
\usepackage{amsmath}
\usepackage{amsthm}
\usepackage{amssymb}
\usepackage{pdfpages}
\usepackage{enumitem}
\usepackage{xspace}

\usepackage{hyphenat} \pagenumbering{arabic}

\usepackage{algorithm}
\usepackage{algpseudocode}
\usepackage{algorithmicx}
\usepackage{dutchcal}
\usepackage{bm}

\newcommand{\Fstar}{ \pmat{F}^{*}}

 \DeclareMathOperator{\Syl}{ \mbox{\upshape {Syl}}}

\newcommand{\softO}{{O\tilde{\phantom{\imath}}}}

\newcommand{\mat}[1]{{#1}} \newcommand{\pmat}[1]{{\mathcal{#1}}}

\newcommand{\companion}{{\phi}}

\newcommand{\conv}{\companion}

\DeclareMathOperator{\Adj}{Adj}

 \DeclareMathOperator{\lcoeff}{{lcoeff}}
\newcommand{\A}{{\pmat{A}}} \renewcommand{\P}{{\pmat{P}}}

\newcommand{\B}{{\pmat{B}}}

\newcommand{\bb}{\pmat{b}}

\newcommand{\Span}{\mbox{\upshape{span}}}
\DeclareMathOperator{\SNF}{SNF} 

\newcommand{\struct}{{\bm{\Delta}}}

\DeclareMathOperator{\rank}{rank} \DeclareMathOperator{\diag}{diag}

\newcommand{\Otil}{{\widetilde{O}}} 

\renewcommand{\vec}{{\mbox{\upshape vec}}}

\newcommand{\pvec}{{\mbox{\upshape pvec}}}

\newcommand{\dist}{\mbox{\upshape{dist}}}

\newcommand{\divs}{{\mskip3mu|\mskip3mu}}

\newcommand{\dd}{{\bf {d}}}

\newcommand{\R}{{\mathsf{R}}}
\newcommand{\Tr}{{\mbox{\upshape Tr}}}
\newcommand{\RR}{{\mathbb{R}}}
\newcommand{\CC}{{\mathbb{C}}}
\newcommand{\QQ}{{\mathbb{Q}}}

\newcommand{\D}{\partial}
\newcommand{\norm}[1]{{\|#1\|}}
\newcommand{\tallnorm}[1]{ {\left \|#1 \right \|}}
\newcommand{\nxn}{{n\times n}}
\newcommand{\ftil}{\widetilde{f}}
\newcommand{\gtil}{\widetilde{g}}
\newcommand{\f}{{\bf {f}}}
\newcommand{\rev}{\mbox{\upshape{rev}}}

\RequirePackage[disable,textsize=small,textwidth=1.5\marginparwidth]{todonotes}

\makeatletter
\renewcommand{\todo}[2][]{\@bsphack\@todo[{#1}]{#2}\@esphack}\makeatother

\newcommand\scalemath[2]{\scalebox{#1}{\mbox{\ensuremath{\displaystyle #2}}}}

\newtheorem{theorem}{Theorem}[section]
\newtheorem{corollary}[theorem]{Corollary}
\newtheorem{lemma}[theorem]{Lemma}

\newtheorem{definition}[theorem]{Definition}
 \newtheorem{example}[theorem]{Example}
\newtheorem{problem}[theorem]{Problem} \newtheorem*{problem*}{Problem}

\newtheorem{conjecture*}{Conjecture} \newtheorem{remark}[theorem]{Remark}
\newtheorem*{remark*}{Remark}

\definecolor{darkgreen}{rgb}{0,.35,0} \definecolor{darkblue}{rgb}{0,0,.5}
\definecolor{darkred}{rgb}{.6,0,0}

\usepackage{hyperref} \hypersetup{pdfauthor={Giesbrecht, Haraldson, Labahn}
  pdftitle={Computing Nearby Non-trivial Smith Forms},
  bookmarksnumbered=true,colorlinks,linkcolor=darkblue,citecolor=darkgreen,
  urlcolor=darkred}

\begin{document}

\begin{frontmatter}

  \title{Computing Nearby Non-trivial Smith Forms}
  
  \author{Mark Giesbrecht}
\ead{mwg@uwaterloo.ca} \ead[url]{https://cs.uwaterloo.ca/~mwg}
  \ead{mwg@uwaterloo.ca}

  \author{Joseph Haraldson}
\ead[url]{https://cs.uwaterloo.ca/~jharalds} \ead{jharalds@uwaterloo.ca}

  \author{George Labahn} \address{Cheriton School of Computer Science,
    University of Waterloo, Waterloo, Canada}
\ead[url]{https://cs.uwaterloo.ca/~glabahn} \ead{glabahn@uwaterloo.ca}

\begin{abstract}
  We consider the problem of computing the nearest matrix polynomial with a
  non-trivial Smith Normal Form. We show that computing the Smith form of a
  matrix polynomial is amenable to numeric computation as an optimization
  problem. Furthermore, we describe an effective optimization technique to
  find a nearby matrix polynomial with a non-trivial Smith form.  The
  results are then generalized to include the computation of a matrix
  polynomial having a maximum specified number of ones in the Smith Form
  (i.e., with a maximum specified McCoy rank).

  We discuss the geometry and existence of solutions and how our results
  can be used for an error analysis.  We develop an
  optimization-based approach and demonstrate an iterative numerical method
  for computing a nearby matrix polynomial with the desired spectral
  properties.  We also describe an implementation of our algorithms and
  demonstrate the robustness with examples in \texttt{Maple}.
\end{abstract}

\end{frontmatter}

\section{Introduction}
\label{sec:intro}

For a given square matrix polynomial $\A\in \RR[t]^{n\times n}$, one can
find unimodular matrices $\pmat{U},\pmat{V} \in \RR[t]^{n\times n}$ such
that $\pmat{U}\A\pmat{V}$ is a diagonal matrix $\pmat{S}$.  Unimodular
means that there is a polynomial inverse matrix, or equivalently, that the
determinant is a nonzero constant from $\RR$.  The unimodular matrices
$\pmat{U},\pmat{V}$ encapsulate the polynomial row and column operations,
respectively, needed for such a diagonalization.  The best known
diagonalization is the Smith Normal Form (SNF, or simply Smith form) of a
matrix polynomial. Here
\[
  \pmat{S} =
  \begin{pmatrix}
    s_1 \\
    & s_2 \\
    && \ddots \\
    &&& s_n
  \end{pmatrix}\in\RR[t]^\nxn,
\]
where $s_1,\ldots,s_n\in\RR[t]$ are monic and $s_i\divs s_{i+1}$ for
$1\leq i<n$.  The Smith form always exists and is unique though the
associated unimodular matrices $\pmat{U}$, $\pmat{V}$ are not unique (see,
e.g., \citep{Kai80,GolLanRod09}).  The diagonal entries $s_1,\ldots,s_n$
are referred to as the \emph{invariant factors} of~$\A$.

Matrix polynomials and their Smith forms are used in many areas of
computational algebra, control systems theory, differential equations and
mechanics.  The Smith form is important as it effectively reveals the
structure of the polynomial lattice of rows and columns, as well as the
effects of localizing at individual eigenvalues.  That is, it characterizes
how the rank decreases as the variable $t$ is set to different values
(especially eigenvalues, where the rank drops).
The Smith form is closely related to the more general {\em Smith-McMillan
  form} for matrices of rational functions, a form that reveals the
structure of the eigenvalue at infinity, or the infinite spectral structure.  
The eigenvalue at infinity is non-trivial if the leading coefficient matrix is 
rank deficient or equivalently, the determinant does not achieve the generic 
degree.

The algebra of matrix polynomials is typically described assuming that the
coefficients are fixed and come from an exact arithmetic domain, usually
the field of real or complex numbers.  In this exact setting, computing
the Smith form has been well studied, and very efficient procedures are
available (see \citep{KalSto15} and the references
therein). However, in some applications, particularly control theory and mechanics,
the coefficients can come from measured data or contain some amount of
uncertainty.  Compounding this, for efficiency reasons such computations
are usually performed using floating point to approximate computations in
$\RR$, introducing roundoff error.
As such, algorithms must accommodate numerical inaccuracies and are prone
to numerical instability.

Numerical methods to compute the Smith form of a matrix
polynomial typically rely on linearization and orthogonal transformations
\citep{VanDew83,BeeVan88,DemKaa93,DemKaa93b,DemEde95} to infer the Smith
form of a nearby matrix polynomial via the Jordan blocks in the Kronecker
canonical form (see \citep{Kai80}).  These linearization techniques 
are numerically  
backwards stable, and for many problems this is sufficient to
ensure that the computed solutions are computationally useful when a
problem is continuous.

Unfortunately, the eigenvalues of a matrix polynomial are not necessarily
continuous functions of the coefficients of the matrix polynomial, and
backwards stability is not always sufficient to ensure computed solutions
are useful in the presence of discontinuities.  Previous methods are also
unstructured in the sense that the computed non-trivial Smith form may not
be the Smith form of a matrix polynomial with a prescribed coefficient
structure, for example, maintaining the degree of entries or not introducing
additional non-zero entries or coefficients. In extreme instances, the
unstructured backwards error can be arbitrarily small, while the structured
distance to an interesting Smith form is relatively large.
Finally, existing numerical methods can also fail to compute meaningful
results on some problems due to uncertainties. Examples of such problems
include nearly rank deficient matrix polynomials, repeated eigenvalues or
eigenvalues that are close together and other ill-posed instances.

In this paper we consider the problem of computing a nearby matrix
polynomial with a prescribed spectral structure, broadly speaking, the
degrees and multiplicities of the invariant factors in the Smith form, or
equivalently the structure and multiplicity of the eigenvalues of the
matrix polynomial.  The results presented in this paper extend those in the
conference paper \citep*{GieHarLab18}. This work is not so much about
computing the Smith normal form of a matrix polynomial using floating point
arithmetic, but rather our focus is on the computation of a nearby matrix
polynomial with ``an interesting'' or non-generic Smith normal form.  The
emphasis in this work is on the finite spectral structure of a matrix
polynomial, since the techniques described are easily generalized to
handle the instance of the infinite spectral structure as a special case.
\todo{made some changes here...}

The Smith form of a matrix polynomial is not continuous with respect to the 
usual topology of the coefficients and the resulting limitations of backward 
stability
is not the only issue that needs to be addressed when finding nearest
objects in an approximate arithmetic environment.
A second issue can be illustrated by recalling the well-known
representation of the invariant factors $s_1,\ldots,s_n$ of a matrix
$\A\in\RR[t]^\nxn$ as ratios $s_i=\delta_{i}/\delta_{i-1}$ of the
\emph{determinantal divisors}
$\delta_0, \delta_1,\ldots,\delta_n\in\RR[t]$, where
\[
  \delta_0 = 1, ~~\delta_i= \text{GCD}\biggl\{ \text{all $i\times i$ minors
    of $\A$} \biggr\} \in\RR[t].
\]
In the case of $2\times 2$ matrix polynomials, computing the nearest
non-trivial Smith form is thus equivalent to finding the nearest matrix
polynomial whose polynomial entries have a non-trivial
GCD. Recall that approximate GCD problems can have infima that are
\emph{unattainable}. That is, there are co-prime polynomials with nearby
polynomials with a non-trivial GCD at distances arbitrarily approaching an
infimum, while at the infimum itself the GCD is trivial (see, e.g.,
\citep*{GieHarKal19}).

The issue of unattainable infima extends to the Smith normal form. As an 
example, consider
\[
  \A = \left( \begin{array}{cc} t^2-2t+1 & \\ & t^2+2t+2 \end{array}
  \right) = \diag(f,g)\in\RR[t]^{2\times 2}.
\]
If we look for nearby polynomials $\ftil,\gtil\in\RR[t]$ of degree at most
$2$ such that $\gcd(\ftil,\gtil)=\gamma t+1$ (i.e. a nontrivial gcd) at
minimal distance $\norm{f-\ftil}_2^2+\norm{g-\gtil}_2^2$ for some
$\gamma\in\RR$, then it is shown in \cite[Example 3.3.6]{Har15} that this
distance is $(5\gamma^4-4\gamma^3+14 \gamma^2+2)/(\gamma^4+\gamma^2+1)$.
This distance has an infimum of $2$ at $\gamma=0$. However at $\gamma=0$ we
have $\gcd(\ftil,\gtil)=1$ even though $\deg(\gcd(\ftil,\gtil))>0$ for all
$\gamma\neq 0$. For $\A$ to have a non-trivial Smith form we must perturb
$f,g$ such that they have a non-trivial GCD, and thus any such perturbation
must be at a distance of at least $2$.  However, the perturbation of
distance precisely $2$ has a trivial Smith form.  There is no merit to
perturbing the off-diagonal entries of $\A$.

Our work indirectly involves measuring the sensitivity to the eigenvalues
of $\A$ and the determinant of $\A$. Thus we differ from most sensitivity
and perturbation analysis (e.g., \citep{Ste94,AhmAla09}),
since we also study how perturbations affect the invariant factors, instead
of the roots of the determinant.  Additionally our theory is able to
support the instance of $\A$ being rank deficient and having degree
exceeding one.  One may also approach the problem geometrically in the
context of manifolds \citep{EdeElmKaa97,EdeElmKaa99}. We do not consider
the manifold approach directly since it does not yield numerical
algorithms.

Determining what it means for a matrix polynomial to have a non-trivial
Smith form numerically and finding the distance from one matrix polynomial
to another matrix polynomial having an interesting or non-trivial Smith
form are formulated as finding solutions to continuous optimization
problems.  The main contributions of this paper are deciding when $\A$ has
an interesting Smith form, providing bounds on a ``radius of triviality''
around $\A$ and a structured stability analysis on iterative methods to
compute a structured matrix polynomial with desired spectral properties.

The remainder of the paper is organized as follows. In
Section~\ref{sec:prelim} we give the notation and terminology along with
some needed background used in our
work. Section~\ref{sec:trivial-snf} discusses the approximate Smith form
computation as an optimization problem and provides some new bounds on the
distance to non-triviality.  We present an optimization algorithm in
Section~\ref{sec:opt1-adjoint} with local stability properties and rapid
local convergence to compute a nearby matrix polynomial with a non-trivial
Smith form and discuss implementation details. A method to compute a matrix
polynomial with a prescribed lower bound on the number of ones in the Smith 
form is discussed
in Section~\ref{sec:optII-linearization}.
We discuss our implementation and include some examples in Section~\ref{sec:implementation}. The paper ends with a conclusion along with topics for future research.

A preliminary version of some of the results in this paper appears in the proceedings of the ISSAC
2018 conference \citep*{GieHarLab18}.

 \section{Preliminaries}
\label{sec:prelim}

In this section we give the notation and formal definitions needed to
precisely describe the problems summarized above.  We also present some
existing results used as building blocks for our work. In addition, we
provide a basic description of matrix functions and their first-order
derivatives (Jacobian matrices) which will be needed for the optimization
work central to our results.

\subsection{Notation and Terminology}
We make extensive use of the following terminology and definitions.  A
matrix polynomial $\A \in \RR[t]^\nxn$ is an $\nxn$ matrix whose entries
are polynomials. Typically we also work with matrices whose entries have
degree bound $d$ and use the notation $\RR_{\leq d}[t]^\nxn$ to describe
this set.  Alternatively, we may express matrix polynomials as
$\A = \sum_{1\leq j\leq d} \mat{A}_j t^j$ where $\mat{A}_j \in
\RR^\nxn$. The \emph{degree} of a matrix polynomial $d$ is defined to be
the degree of the highest-order non-zero entry of $\A$, or the largest
index $j$ such that $\mat{A}_j\neq 0$.  We say that $\A$ has \emph{full
  rank} or is \emph{regular} if $\det(\A)\nequiv 0$.  As noted in the
introduction, $\A$ is said to be \emph{unimodular} if
$\det(\A)\in \RR\backslash\{0\}$. The \emph{(finite) eigenvalues} are the
roots of $\det(A)\in\RR[t]$.  The norm of a polynomial $a\in \RR[t]$ is
defined as
$\norm{a} = \norm{a}_2 = \norm{(a_0,a_1,\ldots, a_d, 0,\ldots, 0)}_2$.  For
matrix polynomials we define
$\norm{\A} = \norm{\A}_F= \sqrt{\sum_{i,j} \norm{
    \pmat{A_{i,j}}}_2^2}$. Our choice of norm is a distributed coefficient
norm, sometimes known as the \emph{Frobenius norm}.

\begin{definition}[SVD -- \citealt{GolLoa12}]
The Singular Value Decomposition (SVD) of $A\in \RR^{n\times n}$ is given
by $U^T \Sigma V$, where $U,V \in \RR^{n\times n}$ satisfy $U^TU =I$,
$V^T V = I$ and $\Sigma = \diag(\sigma_1,\ldots, \sigma_{n})$ is a
diagonal matrix with non-negative real entries in descending order of
magnitude, the \emph{singular values} of $A$. The distance to the nearest
(unstructured) matrix of rank $m<n$ is $\sigma_{m+1}(A)$.
\end{definition}
For scalar matrices we frequently write $\norm{\cdot}_2$ for the largest
singular value, and $\sigma_{\min}(\cdot)$ for the smallest singular value.

In this paper we are mainly concerned with coefficient structures that
preserve the zero coefficient structure of a matrix polynomial, that is,
we generally do not change zero coefficients to non-zero, or increase the
degrees of matrix entries.
\begin{definition}[Affine/Linear Structure]
  A non-zero matrix polynomial $\A\in \R[t]^{n\times n}$ of degree at most
  $d$ has a \emph{linear structure} from a set $\mathcal{K}$ if
  $\A \in \Span(\mathcal{K})$ as a vector space over $\R$, where
  \[
    \mathcal{K} = \bigl\{C_{0,0},
\ldots, C_{0,k} ,t C_{1,0} ,
\ldots, tC_{1,k}, \ldots, t^d C_{d,0},\ldots, t^d C_{d,k}
    \bigr\},
  \]
  where $C_{l,j}\in \R^{n\times n}$ for $0\leq j \leq k$, where $k>0$ is a 
finite 
index variable.  If $\A=\mathcal{C}_0 + \mathcal{C}_1$,
  where $\mathcal{C}_0 \in \R[t]^{n\times n}$ is fixed and
  $\mathcal{C}_1 \in \Span(\mathcal{K})$, then $\A$ is said to have an
  \emph{affine} structure from the set~$\mathcal{K}$.
\end{definition}
\todo{I changed this. Removed LI and said $k$ was an index variable.}
Linearly and affine linearly structured matrices are best thought of as
imposing linear equality constraints on the entries. 
Examples of matrices with a linear structure include matrices with prescribed 
zero entries or coefficients, Toeplitz/Hankel matrices, Sylvester matrices, 
resultant-like matrices, Ruppert matrices and several other matrices appearing 
in symbolic-numeric computation. Matrices with  an affine structure include all 
matrices with a linear structure and, in addition, matrices having prescribed  
non-zero constant entries/coefficients, for example monic matrix polynomials.

Recall that the \emph{rank} of a matrix polynomial is the maximum number of
linearly independent rows or columns as a vector space over $\RR(t)$. This
is the same as the rank of the matrix $\A(\omega)$ for any $\omega\in\CC$ 
that is not an eigenvalue of $\A(t)$. If we allow evaluation at eigenvalues, 
then the McCoy rank is the lowest rank when $\A$ is evaluated at an eigenvalue.

\begin{definition}[McCoy Rank and Non-Trivial SNF]
  The McCoy rank of $\A$ is $\min_{\omega \in \CC} \{ \rank \A(\omega)\}$,
  the lowest rank possible when $\A$ is evaluated at any $\omega\in\CC$.
  Note that the rank of $\A$ only drops at all if it is evaluated at an
  eigenvalue $\omega\in\CC$.  The McCoy rank is also the number of ones in
  the Smith form. Equivalently, if $\A$ has $r$ non-trivial invariant
  factors, then the McCoy rank of $\A$ is $n-r$.  The matrix polynomial
  $\A$ is said to have a \emph{non-trivial or interesting Smith normal form} if 
the McCoy rank 
  is at most $n-2$, or equivalently, if it has two or more invariant factors of 
  non-zero degree. 
\end{definition}
\todo{ I changed this to allow zero as an invariant factor. It was wrong 
earlier.}

\begin{problem}[Approximate SNF Problem]
  \label{prb:approx-snf}
  Given a matrix polynomial $\A \in \RR[t]^\nxn$, find the distance to a
  non-trivial SNF. That is, find a matrix polynomial
  $\widehat {\A} \in \RR[t]^\nxn$ of prescribed coefficient structure that
  has a prescribed McCoy rank of at most $n-r$, for $r\geq 2$, such that
  $\norm{\A-\pmat{\widehat A}}$ is minimized under $\norm{\cdot}$, if such
  an $\widehat{\A}$ exists.
\end{problem}
We will consider $\norm{\cdot}=\norm{\cdot}_F$ to be the Frobenius norm.
The nearest matrix of McCoy rank at most $n-2$, if it exists, is called the
\emph{approximate SNF}.

\begin{problem}[Lower McCoy Rank Approximation Problem]
  \label {prb:lower-mccoy}
  Compute the nearest matrix with McCoy rank $n-r$ matrix, if it exists, for
  $r\geq 3$.
\end{problem}
In a generic sense, the nearest matrix polynomial with an interesting SNF
will have McCoy rank $n-2$ with probability one, but many matrices arising
from applications are expected to have more interesting (i.e. the invariant 
factors have a richer or non-generic multiplicity structure) Smith forms nearby.
\todo{made a change here}

As described in the introduction, it is possible that the distance
to a non-trivial SNF is not attainable. That is, there is a solution that
is approached asymptotically, but where the Smith form is trivial at the
infimum.
Fortunately, in most instances of interest, solutions will generally be
attainable. We will also later discuss how to identify and compute
unattainable solutions. Problem~\ref{prb:approx-snf} and
Problem~\ref{prb:lower-mccoy} admit the nearest rank $n-1$ or rank $n-2$
matrix polynomial as a special case. However, the computational challenges
are fundamentally different for non-trivial instances.

\subsection{Embedding into Scalar Domains}
In our study of nearest non-trivial Smith forms we often make use of the
representation of the diagonal elements as ratios of GCDs of
sub-determinants. When the coefficients are polynomials with numeric
coefficients it is helpful to embed the arithmetic operations of polynomial
multiplication and polynomial GCD into a matrix problem having numeric
coefficients (i.e., from $\RR$). Such an embedding allows us to employ a
number of tools, including condition numbers and perturbations of matrix
functions.

We start with some basic notation for mapping matrices and polynomials to
vectors.

\begin{definition}[Vec Operator]
  We define the operator $\vec:\RR[t]\to\RR^{(d+1)\times 1}$ as follows:
  \[
    p = \sum_{j=0}^d p_j t^t \in\RR[t] ~~\mapsto~~ \vec(p) =
    (p_0,p_1,\ldots, p_{d})^T \in \RR^{(d+1)\times 1}
  \]
The $\vec$ operator $\vec(\cdot)$ is extended to map a matrix 
$\RR[t]^{m\times n}$
  to a single vector in $\RR^{mn(d+1)\times 1}$ by stacking columns of
  (padded) coefficient vectors on top of each other.
  \[
    \A\in\RR[t]^{m\times n} \mapsto \vec(\A) = \begin{pmatrix}
      \vec(\A_{11})\\
      \vdots\\
      \vec(\A_{mn})
    \end{pmatrix}
    \in \RR^{mn(d+1)\times 1}.
  \]
\end{definition}

It is sometimes useful to reduce matrix polynomials to vectors of polynomials
in $\RR[t]$ rather than vectors over $\RR$.
\begin{definition}[Polynomial Vec Operator]
  The $\pvec$ operator maps $\RR[t]^{m\times n}$ to a vector
  $\RR[t]^{nm \times 1}$ as
  \[
    \A\in\RR[t]^{m\times n} \mapsto \pvec(\A) = \begin{pmatrix}
      \A_{11}\\
      \vdots\\
      \A_{mn}
    \end{pmatrix}
    \in\RR[t]^{mn\times 1}.
  \]
We define the vectorization of matrix polynomials in this somewhat
  non-standard way so that we can later facilitate the computation of derivatives of matrix
  polynomial valued functions.
\end{definition}

To describe polynomial multiplication in terms of linear maps over scalars we have:

\begin{definition}[Convolution Matrix]
  Polynomial multiplication between polynomials $a,b\in \RR[t]$, of degrees
  $d_1$ and $d_2$, respectively may be expressed as a Toeplitz-matrix-vector
  product.  We define
  \[
    \phi_{d_2}(a)
    =
    \begin{pmatrix}
          a_0 \\
          \vdots  & \ddots \\
          a_{d_1} & & a_0 \\
          & \ddots & \vdots \\
          & & a_{d_1}
        \end{pmatrix}
        \in\RR^{(d_1+d_2+1)\times (d_2+1)}.
    ~~~~~~~\mbox{It follows that}~~~
    \vec(ab)=\conv_{d_2}(a)\vec(b).
  \]
\end{definition}
When $a$ is non-zero, we can also define division through pseudo-inversion
or linear least squares. In a similar manner, we can define the product of 
matrix polynomials through a Toeplitz-block matrix. 

\begin{definition}[Block Convolution Matrix]
  We can express multiplication of a matrix and vector of polynomials,
  $\A \in \RR[t]^{m\times n}$ and $\bb \in \RR[t]^{n\times 1}$, of degrees
  at most $d_1$ and $d_2$ respectively, as a scalar linear system
  \[
    \vec(\A \bb) = \Phi_{d_2}(\A) \vec(\bb),
  \]
  where
  \[
    \Phi_{d_2}(\A) = \begin{pmatrix}
      \phi_{d_2}(\A_{11}) & \cdots &\phi_{d_2}(\A_{1n})\\
      \vdots & & \vdots \\
      \phi_{d_2}(\A_{m1}) & \cdots & \phi_{d_2}(\A_{mn})
    \end{pmatrix} \in \RR^{m(d_1+d_2+1) \times n(d_2+1)}.
  \]
\end{definition}
The block convolution matrix is sometimes referred to as a ``Sylvester
matrix'' associated with $\A$.  However, we reserve the term ``Sylvester
matrix'' for the more standard linear system appearing from the GCD of two
(or more) polynomials.  The block convolution matrix is a scalar matrix
whose entries have a linear (Toeplitz-block) structure.

\begin{definition}[Kronecker Product]
  The Kronecker product of $\A \in \RR[t]^{m\times n}$ and
  $\B \in \RR[t]^{k\times \ell}$ denoted as $\A\otimes \B$ is the
  $mk \times n\ell$ matrix over $\RR[t]$ defined as
  \[ \A \otimes \B = \begin{pmatrix}
      \A_{11} \B & \cdots &  \A_{1n} \B\\
      \vdots  &   & \vdots \\
      \A_{m1}\B & \cdots & \A_{mn} \B
    \end{pmatrix} \in \RR[t]^{mk \times n\ell}.
  \]
\end{definition}
This definition of Kronecker product, sometimes referred to as
the ``outer product'', also holds for scalar matrices (and vectors).

\begin{lemma}
  For scalar matrices of compatible dimension $A,X$ and $B$ over $\RR$, we
  can
  \[
    \vec(AXB) = (B^T\otimes A) \vec(X).
  \]
  Likewise, for matrix polynomials $\A, \pmat{X}$ and $\B$ of compatible
  dimension over $\RR[t]$, we have
  \[
    \pvec(\A \pmat{X} \B) = (\B^T \otimes \A) \pvec( \pmat{X}).
  \]
\end{lemma}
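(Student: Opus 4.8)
The plan is to prove a single identity over an arbitrary commutative ring $R$ and then read off both statements as the special cases $R=\RR$ and $R=\RR[t]$. Indeed, the second claim is literally the first with scalars replaced by polynomials: $\pvec$ is exactly the column-stacking vectorization used to define $\vec$ (just without the further flattening of each polynomial entry into a coefficient vector), and the Kronecker product in the excerpt is defined over $\RR[t]$ by the same formula as over $\RR$. So it suffices to show that for $A\in R^{m\times n}$, $X\in R^{n\times q}$, $B\in R^{q\times p}$ one has $\vec(AXB)=(B^T\otimes A)\vec(X)$, where $\vec$ stacks columns.

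First I would record two elementary facts valid over any commutative ring. (i) For a column vector $u\in R^m$ and a column vector $v\in R^p$, the rank-one product satisfies $\vec(uv^T)=v\otimes u$; this is immediate from the definitions of $\vec$ and of $\otimes$. (ii) The mixed-product rule $(P\otimes Q)(S\otimes T)=(PS)\otimes(QT)$ holds over $R$ whenever the block sizes are compatible — and I only need the instance where $S$ and $T$ are standard basis columns, which is a relabelling of entries using nothing beyond distributivity and commutativity of $R$.

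Then I would expand $X$ in the standard basis as $X=\sum_{i=1}^{n}\sum_{j=1}^{q}X_{ij}\,e_i e_j^T$ with $e_i\in R^{n}$ and $e_j\in R^{q}$, and compute
\[
  AXB \;=\; \sum_{i,j} X_{ij}\,(Ae_i)(e_j^T B) \;=\; \sum_{i,j} X_{ij}\,(Ae_i)(B^T e_j)^T,
\]
where the last step uses commutativity of $R$ to write $(e_j^T B)^T=B^T e_j$. Applying fact (i) to each rank-one summand and then fact (ii) gives
\[
  \vec(AXB) \;=\; \sum_{i,j} X_{ij}\,(B^T e_j)\otimes(Ae_i) \;=\; (B^T\otimes A)\sum_{i,j} X_{ij}\,(e_j\otimes e_i) \;=\; (B^T\otimes A)\,\vec(X),
\]
the final equality because $\sum_{i,j}X_{ij}(e_j\otimes e_i)$ is precisely the column-stacked vector $\vec(X)$. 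Taking $R=\RR$ yields the scalar statement, and taking $R=\RR[t]$ (with $\vec$ read as $\pvec$) yields the matrix-polynomial statement.

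I do not expect a genuine obstacle here; the one point that needs care is bookkeeping about orientation. The identity holds with the factor $B^T\otimes A$, not $A\otimes B^T$, exactly because $\vec$ and $\pvec$ stack columns rather than rows, so I would make sure every step respects that convention and note explicitly that it is the commutativity of $R$ (hence of $\RR[t]$) that lets $(e_j^T B)^T=B^T e_j$ and the mixed-product rule carry over unchanged from the scalar case.
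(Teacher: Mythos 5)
The paper states this lemma as a standard fact and provides no proof, so there is no in-text argument to compare against; I will judge your proposal on its own. Your proof is correct and complete: the decomposition $X=\sum_{i,j}X_{ij}e_ie_j^T$ reduces everything to rank-one pieces, the identity $\vec(uv^T)=v\otimes u$ and the restricted mixed-product rule are both valid over any commutative ring, and the orientation bookkeeping (obtaining $B^T\otimes A$ rather than $A\otimes B^T$) is carried out correctly for the column-stacking convention both $\vec$ and $\pvec$ use. Your observation that the entire argument is ring-agnostic is the right way to dispatch both halves of the lemma at once; indeed that is almost certainly the implicit reason the authors felt free to state the polynomial version as ``likewise'' without further comment. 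One minor expository point: you invoke commutativity of $R$ to justify $(e_j^TB)^T=B^Te_j$, but this particular step actually uses nothing beyond the definition of transpose, since for any vector $w$ one has $(w^T)^T=w$ and $(e_j^TB)^T=B^T(e_j^T)^T$ only requires the anti-homomorphism property of transpose restricted to the case where one factor is a standard basis row, which holds entrywise without reordering any products; commutativity is genuinely used in $\vec(uv^T)=v\otimes u$ and in the mixed-product rule, which you correctly flag.
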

The Kronecker product can also be used to re-write matrix equations of the
form $AX=B$, for matrices $A$, $B$ and $X$ of compatible dimensions, to
\[
  \vec(AX) = (X^T\otimes I)\vec(A) = (I\otimes A)\vec(X) = \vec(B).
\]

\subsection{Derivatives of Matrix Polynomial Valued Functions}
\label{sec:dervs}

In this paper we will need to compute derivatives of some important matrix 
polynomial valued functions, namely the determinant and adjoint. 
This problem is approached in the context of  computing the Jacobian matrix of 
a vector valued function. The analysis in this section will be useful for 
showing that Lagrange multipliers typically exist in the optimization problems 
encountered. The quantities computed can also be used to derive first-order 
perturbation bounds for these matrix polynomial valued functions with respect 
to $\norm{\cdot}_F$.  

Recall that the adjoint of a matrix polynomial $\A\in\RR[t]^\nxn$, denoted by
$\Adj(\A)\in\RR[t]^\nxn$, is the transpose of the cofactor matrix. Thus 
$\Adj(\A)_{ij}=(-1)^{i+j}\det(\A[j|i])$ where
$\A[j|i]$ is the $(j,i)$ cofactor of $\A$, 
that is, the matrix formed by removing row $j$ and column $i$ from $\A$.
When $\A$ has full rank, $\A$ satisfies $\A \Adj(\A) = \det(\A)I$.   

The determinant of an $n \times n$ matrix polynomial having entries of degree 
at most $d$ can be viewed as a mapping from $\RR^{n^2 (d+1)} \to \RR^{nd + 1}$, 
since the determinant has degree at most $nd$. 
With this same viewpoint, we can view the adjoint of a matrix polynomial as a
mapping from $\RR^{n^2(d+1)} \to \RR^{n^2((n-1)d+1)}$, since the degree of the 
entries of the adjoint are at most $(n-1)d$. Our notation for computing 
derivatives of vector valued functions follows that of \citep{MagNeu88}.

It is not surprising that the determinant of a matrix
polynomial has a similar identity~\citep{MagNeu88} to the well-known scalar 
identity $\nabla \det(A) = \vec((\Adj(A)^T)^T).$ 

\begin{theorem}
  \label{thm:matrix-poly-adj-derivative}
  Let $\A \in \RR[t]^{n\times n}$ have degree at most $d$, then
  \[
    J_{\det} = \frac{\D \vec(\det(\A))}{\D \vec(\A)} \\
= \Phi_{d}(\pvec(\Adj(\A)^T)^T)\in \RR^{ (nd+1) \times n^2(d+1)}.
  \]
\end{theorem}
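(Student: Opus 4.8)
The plan is to reduce the matrix-polynomial statement to the classical scalar identity $\nabla\det(A)=\vec(\Adj(A)^T)$ by a coefficient-wise bookkeeping argument, using the convolution/block-convolution machinery introduced above. First I would recall the cofactor (Laplace) expansion $\det(\A)=\sum_{i} \A_{ij}\,(-1)^{i+j}\det(\A[i|j])$ along column $j$, and observe that $\det(\A[i|j])$ does not involve any entry of column $j$. Hence, viewing $\det(\A)$ as a polynomial-valued function of the polynomial entries $\A_{ij}$, the "partial derivative" of $\det(\A)$ with respect to the entry $\A_{ij}$ — in the sense of differentiating a bilinear/multilinear map — is exactly the polynomial $(-1)^{i+j}\det(\A[i|j])=\Adj(\A)_{ji}$. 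The only subtlety over $\RR[t]$ versus $\RR$ is that "multiplication by the cofactor polynomial" is itself a linear map on coefficient vectors, and that map is precisely a convolution matrix $\phi_d(\cdot)$ (since each cofactor has degree at most $(n-1)d$ and each entry $\A_{ij}$ has degree at most $d$, giving a product of degree at most $nd$, consistent with the stated codomain $\RR^{nd+1}$).

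Next I would assemble these block partials. Write $\vec(\A)$ as the stack of the coefficient vectors $\vec(\A_{11}),\dots,\vec(\A_{nn})$ (column-major, per the $\vec$ convention). Then by the chain rule / linearity of $\det$ in each entry,
\[
  J_{\det}=\frac{\D\vec(\det\A)}{\D\vec(\A)}
  =\Bigl(\ \phi_d(\Adj(\A)_{11})\ \big|\ \phi_d(\Adj(\A)_{21})\ \big|\ \cdots\ \big|\ \phi_d(\Adj(\A)_{nn})\ \Bigr),
\]
where the ordering of the blocks matches the column-major ordering of the entries $\A_{ij}$ and picks out $\Adj(\A)_{ji}$ for the block dual to $\A_{ij}$. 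The claim is then that this horizontal concatenation of convolution blocks is exactly $\Phi_d\bigl(\pvec(\Adj(\A)^T)^T\bigr)$. I would verify this by unwinding the definition of the block convolution matrix $\Phi_{d_2}(\cdot)$ applied to a $1\times n^2$ row of polynomials: $\Phi_d$ of a row $(b_1,\dots,b_{n^2})$ is precisely $\bigl(\phi_d(b_1)\,|\,\cdots\,|\,\phi_d(b_{n^2})\bigr)$, so it remains only to check that $\pvec(\Adj(\A)^T)$, as a column of polynomials, lists the entries $\Adj(\A)_{ji}$ in the order dual to the column-major listing of $\A_{ij}$ in $\vec(\A)$ — i.e. that transposing $\Adj(\A)$ converts the row-major $\pvec$ ordering into the column-major ordering, which is exactly the index swap $ij\mapsto ji$ produced by differentiating with respect to $\A_{ij}$ the term containing $\Adj(\A)_{ji}$. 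This is the one place where the non-standard $\pvec$ convention (flagged in the definition as chosen precisely "to facilitate the computation of derivatives") does real work.

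The main obstacle, such as it is, is purely this index bookkeeping: making the three orderings — the stacking order in $\vec(\A)$, the block order produced by the chain rule, and the entry order of $\pvec(\Adj(\A)^T)$ — line up, including the transpose and the column-vs-row-major mismatch, and confirming the degree/dimension counts ($(nd+1)\times n^2(d+1)$) are consistent throughout. A clean way to discharge it is to first prove the scalar case directly (differentiate the multilinear form $\det$, getting $\nabla\det(A)=\vec(\Adj(A)^T)$ via cofactor expansion) and then "lift" each scalar partial derivative to its polynomial analogue by replacing the scalar multiplication $c\mapsto \Adj(A)_{ji}\cdot c$ with the convolution operator $\phi_d(\Adj(\A)_{ji})$, citing the identity $\vec(ab)=\phi_{d_2}(a)\vec(b)$ from the Convolution Matrix definition; the block-convolution matrix $\Phi_d$ is then just the name for the resulting concatenation, as in the Block Convolution Matrix definition. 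No analytic input beyond linearity is needed, since $\det$ is a polynomial map and its derivative is computed formally.
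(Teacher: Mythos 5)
Your proposal is correct and takes essentially the same approach as the paper: lift the scalar identity $\nabla\det(A)=\vec(\Adj(A)^T)^T$ to the polynomial setting and package the result with convolution / block-convolution matrices. The only difference is one of detail — you make explicit, via Laplace expansion and index bookkeeping, the step the paper compresses into the phrase ``generalizing the scalar identity'' before extracting $J_{\det}$ as a block-convolution matrix.
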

\begin{proof}
  We note that from generalizing the scalar identity
  $\nabla \det(\cdot) = \vec(\Adj(\cdot)^T)^T$, we can write a first-order
  expansion of the determinant as
  \[ \det(\A+\Delta \A) = \det(\A)+ \pvec(\Adj(\A)^T)^T \pvec(\Delta
    \A) + O(\norm{\Delta \A}_F^2),\] and ignoring
  higher-order terms we obtain the scalar expression
  \[ \vec(\det(\A+\Delta \A)) \approx \vec(\det(\A))+
    \vec(\pvec(\Adj(\A)^T)^T \pvec(\Delta \A)).\]

  The Jacobian can be extracted by (padding with zero coefficient entries
  as necessary) writing
  $ \vec(\pvec(\Adj(\A)^T) ^T \pvec(\Delta \A)) = J_{\det} \vec(\Delta \A)$
  as a matrix-vector product.  Thus, using block-convolution matrices we
  have
  \[ \frac{\D\vec(\det(\A))}{\D \vec(\A)} = \nabla(\det(\A)) =
    \Phi_{d}(\pvec(\Adj(\A)^T)^T). \qedhere \] 
\end{proof}

Now that we have a closed-form expression for the derivative of the 
determinant, it is useful to derive a closed-form expression for the adjoint 
matrix.  The closed-form expression reveals rank information, and is 
independently useful for optimization algorithms requiring derivatives. The 
rank information is useful to obtain insights about the existence of Lagrange 
multipliers. 
If $J_{\det}$ has full or locally constant (row) rank then constraint 
qualifications will hold for several constrained optimization problems 
involving the determinant. If $\Adj(\A)$ is non-zero then one can often infer 
the existence of Lagrange multipliers for other problems as well.

\begin{theorem}\label{thm:matrix-poly-adjoint-2nd-derivative}
  Let $\A\in \RR[t]^{n\times n}$ have degree at most $d$ and rank $n$.  The
  Jacobian of $\Adj(\A)$ is
  $J_{\Adj}\in \RR^{(n^2((n-1)d+1)) \times n^2(d+1)}$ with
  \[
    J_{\Adj} = \left[ \Phi_{(n-1)d} (I\otimes \A)\right] ^+ \left[
      \Phi_{d}(\pvec(I) \pvec(\Adj(\A)^T)^T) - \Phi_{d}(\Adj(\A)^T\otimes
      I)\right],
  \]
  where $I$ is understood to be the $n\times n$ identity matrix and 
for a scalar matrix $A$ of full rank, $A^+$ is the Moore-Penrose pseudo-inverse 
arising from the SVD. 
\end{theorem}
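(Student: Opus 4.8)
The plan is to differentiate the cofactor identity $\A\,\Adj(\A)=\det(\A)\,I$ and then to recover the first-order change in $\Adj(\A)$ by inverting the block-convolution matrix attached to $\A$. Since every entry of $\Adj(\A)$ is an $(n-1)\times(n-1)$ minor of $\A$, the map $\A\mapsto\Adj(\A)$ is polynomial in the coefficients, so $J_{\Adj}$ exists and $\Adj(\A+\Delta\A)=\Adj(\A)+\Delta\Adj+O(\norm{\Delta\A}_F^2)$ with $\vec(\Delta\Adj)=J_{\Adj}\vec(\Delta\A)$; it therefore suffices to pin down $\Delta\Adj$.

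First I would expand $(\A+\Delta\A)(\Adj(\A)+\Delta\Adj)=\det(\A+\Delta\A)\,I$ to first order. Using the first-order expansion of the determinant from the proof of Theorem~\ref{thm:matrix-poly-adj-derivative}, namely $\det(\A+\Delta\A)=\det(\A)+\pvec(\Adj(\A)^T)^T\pvec(\Delta\A)+O(\norm{\Delta\A}_F^2)$, and cancelling $\A\,\Adj(\A)=\det(\A)\,I$, I obtain the matrix-polynomial identity
\[
  \A\,\Delta\Adj=\bigl(\pvec(\Adj(\A)^T)^T\pvec(\Delta\A)\bigr)\,I-\Delta\A\,\Adj(\A)+O(\norm{\Delta\A}_F^2).
\]

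Next I would vectorize, term by term, using the Kronecker identity $\pvec(\A\pmat{X}\B)=(\B^T\otimes\A)\pvec(\pmat{X})$: the left-hand side $\A\,\Delta\Adj\,I$ becomes $(I\otimes\A)\pvec(\Delta\Adj)$; the term $\Delta\A\,\Adj(\A)=I\,\Delta\A\,\Adj(\A)$ becomes $(\Adj(\A)^T\otimes I)\pvec(\Delta\A)$; and, because the scalar factor $\pvec(\Adj(\A)^T)^T\pvec(\Delta\A)$ commutes freely, the middle term $\bigl(\pvec(\Adj(\A)^T)^T\pvec(\Delta\A)\bigr)I$ becomes $\pvec(I)\,\pvec(\Adj(\A)^T)^T\,\pvec(\Delta\A)$. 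Converting each of these polynomial-vector products into a scalar matrix-vector product through the block-convolution matrices $\Phi_{(n-1)d}(\cdot)$ and $\Phi_{d}(\cdot)$ — valid because $\Adj(\A)$ and $\Delta\Adj$ have degree at most $(n-1)d$ while $\Delta\A$ has degree at most $d$, which also pins down all the stated matrix sizes — the identity becomes
\[
  \Phi_{(n-1)d}(I\otimes\A)\,\vec(\Delta\Adj)=\Bigl[\Phi_{d}(\pvec(I)\pvec(\Adj(\A)^T)^T)-\Phi_{d}(\Adj(\A)^T\otimes I)\Bigr]\vec(\Delta\A)+O(\norm{\Delta\A}_F^2).
\]

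Finally I would invert. Since $\A$ has rank $n$, the matrix polynomial $I\otimes\A$ has rank $n^2$, so $\Phi_{(n-1)d}(I\otimes\A)$ has full column rank (if $(I\otimes\A)v\equiv 0$ for a polynomial vector $v$, then $v\equiv 0$). Hence its Moore--Penrose pseudo-inverse $[\Phi_{(n-1)d}(I\otimes\A)]^+$ is a left inverse; multiplying both sides on the left by it and discarding the $O(\norm{\Delta\A}_F^2)$ terms yields $\vec(\Delta\Adj)=J_{\Adj}\vec(\Delta\A)$ with $J_{\Adj}$ equal to the claimed expression. I expect the main obstacle to be the bookkeeping in the vectorization step — keeping the $\pvec$/$\vec$ orderings and the degree bounds consistent so that the three convolution matrices compose correctly — together with the full-column-rank argument, which is what licenses passing from ``solve this linear system for $\Delta\Adj$'' to ``left-multiply by the pseudo-inverse.'' (An entirely analogous computation starting from $\Adj(\A)\,\A=\det(\A)\,I$ gives an equivalent formula.)
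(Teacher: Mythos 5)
Your proof is correct and follows essentially the same route as the paper: differentiate the cofactor identity $\A\,\Adj(\A)=\det(\A)\,I$, vectorize via the Kronecker identity into block-convolution matrices, and solve for $\vec(\D\Adj(\A))$ by left-multiplying with the pseudo-inverse of $\Phi_{(n-1)d}(I\otimes\A)$. The only cosmetic difference is presentational — you phrase the differentiation as a first-order perturbation expansion where the paper applies the product rule directly — and you are slightly more explicit than the paper in justifying the final step, noting that full rank of $\A$ forces $\Phi_{(n-1)d}(I\otimes\A)$ to have full column rank so that its pseudo-inverse acts as a left inverse, whereas the paper simply asserts pseudo-invertibility.
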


\begin{proof}
  First recall that if $\A$ has full rank, then
  $\A\Adj(\A) = \Adj(\A)\A = \det(\A)I$. This expression defines the
  adjoint matrix when $\A$ has full rank.  We can write
  \[
    \pvec(\A \Adj(\A)) = (\Adj(\A)^T \otimes I)\pvec(\A) = (I\otimes \A)
    \pvec(\Adj(\A)),
  \]
  thus converting to a linear system over $\RR$ produces
  \[
    \vec(\A\Adj(\A)) = \Phi_{(n-1)d}(I\otimes \A)
    \vec(\Adj(\A)) = \Phi_{d}(\Adj(\A)^T\otimes I)\vec(\A).
  \]
  Applying the product rule yields
  \begin{equation}
    \label{eqn:prod-rule}
    \D \vec(\A \Adj(\A)) =
    (\D\Phi_{(n-1)d}(I \otimes \A)) \vec(\Adj(\A)) + 
    \Phi_{(n-1)d}(I\otimes \A) \D \vec(\Adj(\A)).
  \end{equation}
  Next we observe that \eqref{eqn:prod-rule} has the same coefficients as
  the expression
  \[
    \vec( (\D \A) \Adj(\A) + \A (\D \Adj(\A)) )
  \]
  which is equivalent to
  \[
    \vec( (\Adj(\A)^T\otimes I) \pvec(\D \A) + (I\otimes \A) \pvec(\D
    \Adj(\A))),
  \]
  which reduces to
  \begin{equation}
    \label{eqn:lhs-final}
    \Phi_{d}((\Adj(\A)^T\otimes I)) \vec(\D \A) + 
    \Phi_{(n-1)d}(I\otimes A) 
    \vec( \D \Adj(\A)).
  \end{equation}
  We now have the derivative of the left hand side the expression
  $\A\Adj(\A)=\det(\A)I$.  Differentiation of the right hand side
  yields
  \[
    \D \vec(\det(\A)I) = \vec( \D \pvec(\det(\A) I)),
  \]
  which is equivalent to the expression
  \begin{equation}
    \label{eqn:rhs-derivative}
    \vec( \D \pvec(\det(\A) I)) =
    \vec( \pvec(I) \pvec(\Adj(\A)^T)^T \pvec(\D\A)).
  \end{equation}
  Converting \eqref{eqn:rhs-derivative} into a linear system over $\RR$
  leads to
 
  \begin{equation}
    \label{eqn:rhs-final}
    \vec( \pvec(I) \pvec(\Adj(\A)^T)^T) \pvec(\D \A) = \Phi_{d}(\pvec(I)
    \pvec(\Adj(\A)^T)^T) \vec(\D \A),
  \end{equation}
  which is the derivative of the right-hand side.

  Combining \eqref{eqn:lhs-final} and \eqref{eqn:rhs-final} we have
  \[
    \Phi_{(n-1)d} (I\otimes \A) \frac{\D\vec(\Adj(\A))}{\D \vec(\A)} =
    \Phi_{d}(\pvec(I) \pvec(\Adj(\A)^T)^T) - \Phi_{d}(\Adj(\A)^T\otimes I).
  \]

  Assuming that $\A$ has full rank so
  $\Phi_{(n-1)d} (\pvec(I\otimes \A) ) $ is pseudo-invertible, we can write
  \[
    J_{\Adj} = \left[ \Phi_{(n-1)d} (I\otimes \A) \right] ^+ \left[
      \Phi_{d}(\pvec(I) \pvec(\Adj(\A)^T)^T) - \Phi_{d}(\Adj(\A)^T\otimes
      I)\right],
  \]
  which completes the proof.
\end{proof}
An observation that is important later is that the derivative of the
adjoint has a Toeplitz-block structure.  More importantly, the bandwidth is
$O(d)$, and we only need to compute $O(n^2)$ columns instead of $O(n^2d)$.
We also note that $J_{\Adj}$ may be padded with zeros, since $\A$ may not
have generic degrees.

\begin{corollary}
  \label{cor:adjoint-derivative-rank}
  If $\A$ has full rank then $J_{\Adj}$ has full rank.
\end{corollary}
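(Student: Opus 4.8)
The plan is to exploit the matrix identity obtained inside the proof of Theorem~\ref{thm:matrix-poly-adjoint-2nd-derivative}, namely $\Phi_{(n-1)d}(I\otimes\A)\,J_{\Adj} = M$, where
\[
M \;:=\; \Phi_{d}(\pvec(I) \pvec(\Adj(\A)^T)^T) \;-\; \Phi_{d}(\Adj(\A)^T\otimes I).
\]
Since left multiplication by an injective matrix preserves the kernel, $\ker J_{\Adj} = \ker M$ as soon as $\Phi_{(n-1)d}(I\otimes\A)$ is injective; because $J_{\Adj}$ has $n^2(d+1)$ columns and (for $n\ge 2$) at least that many rows, it then suffices to show that $M$ has trivial kernel, i.e.\ full column rank.

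First I would check injectivity of $\Phi_{(n-1)d}(I\otimes\A)$. Since $\A$ has full rank, $\det(I\otimes\A)=\det(\A)^n\not\equiv 0$, so $I\otimes\A$ has full column rank as a matrix polynomial. A block convolution matrix of a full-column-rank matrix polynomial is injective: $\Phi_{d_2}(\B)\vec(\bb)=\vec(\B\bb)=0$ forces $\B\bb=0$ as a polynomial vector, hence $\bb=0$ because $\B$ has trivial right null space over $\RR(t)$.

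The heart of the argument is showing $\ker M=\{0\}$. Using $\pvec(\Adj(\A)^T)^T\pvec(X)=\Tr(\Adj(\A)X)$ and $(\Adj(\A)^T\otimes I)\pvec(X)=\pvec(X\Adj(\A))$, one collapses the two convolution blocks and finds, for every $X\in\RR[t]^{\nxn}$ of degree at most $d$,
\[
M\,\vec(X) \;=\; \vec\!\bigl(\pvec\!\bigl(\Tr(\Adj(\A)X)\,I - X\,\Adj(\A)\bigr)\bigr),
\]
so $M\vec(X)=0$ is equivalent to the matrix-polynomial identity $X\,\Adj(\A)=\Tr(\Adj(\A)X)\,I$. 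Taking traces and using $\Tr(X\Adj(\A))=\Tr(\Adj(\A)X)$ gives $(n-1)\Tr(\Adj(\A)X)=0$, so $\Tr(\Adj(\A)X)=0$ (here $n\ge 2$) and hence $X\Adj(\A)=0$. Finally, full rank of $\A$ forces $\Adj(\A)$ to be invertible over $\RR(t)$ --- indeed $\Adj(\A)\A=\det(\A)I$ with $\det(\A)\not\equiv 0$ --- so $X=0$. Thus $M$, and therefore $J_{\Adj}$, has full rank.

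The routine parts are the kernel-preservation remark and the passage from a full-column-rank matrix polynomial to an injective block convolution matrix. The one point that needs care is the collapse in the third paragraph: correctly reducing the two Toeplitz-block summands of $M$ to the displayed matrix-polynomial expression, and noticing that the trace trick genuinely uses $n\ge 2$ --- which is no loss, since having a non-trivial Smith form already requires $n\ge 2$.
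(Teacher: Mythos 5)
Your argument is correct and follows the same overall structure as the paper's: both reduce the claim to showing that $M := \Phi_{d}(\pvec(I)\pvec(\Adj(\A)^T)^T) - \Phi_{d}(\Adj(\A)^T\otimes I)$ has full column rank, after noting that $\Phi_{(n-1)d}(I\otimes\A)$ is injective when $\A$ has full rank. Where you diverge is in how full column rank of $M$ is established. The paper evaluates the underlying rank-one-update matrix polynomial $\pvec(I)\pvec(\Adj(\A)^T)^T - \Adj(\A)^T\otimes I$ at a non-eigenvalue $\omega$ and invokes the Sherman--Morrison formula; the determining scalar is $1 - \Tr(\Adj(A)^T(\Adj(A)^T)^{-1}) = 1 - n$, non-zero for $n\ge 2$. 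You instead work directly with the kernel: you collapse $M\vec(X)=0$ to the matrix-polynomial identity $X\Adj(\A) = \Tr(\Adj(\A)X)\,I$, take traces to extract the factor $n-1$, and then use invertibility of $\Adj(\A)$ over $\RR(t)$. The two computations are really the same --- your $n-1$ is exactly the paper's Sherman--Morrison scalar --- but your route is more self-contained (no specialization at $\omega$, no appeal to Sherman--Morrison) and it also treats the last step more carefully: you pass from $\ker M = 0$ to $\ker J_{\Adj}=0$ via the identity $\Phi_{(n-1)d}(I\otimes\A)\,J_{\Adj} = M$, which is tighter than the paper's closing remark that a product of two full-rank matrices is full rank (that step actually relies on the range of $M$ lying inside the range of $\Phi_{(n-1)d}(I\otimes\A)$, which is exactly the kernel identity you make explicit).
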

\begin{proof}
  The matrix $\Phi_{(n-1)d}(I\otimes \A)$ has full rank since $I\otimes \A$
  has full rank.  The matrix
  \begin{equation}
    \label{eqn:adjoint-rank1-update}
    \pvec(I)\pvec(\Adj(\A)^T)^T - \Adj(\A)^T\otimes I  = -\left( 
      -\pvec(I)\pvec(\Adj(\A)^T)^T + \Adj(\A)^T\otimes I  \right)
  \end{equation}
  is a rank one update to a matrix polynomial.  By evaluating
  \eqref{eqn:adjoint-rank1-update} at a complex number $\omega$ that is not
  an eigenvalue of $\A$ we can show that \eqref{eqn:adjoint-rank1-update}
  has full rank. Let $A = \A(\omega)$, so $A \in \RR^{n\times n}$ has full
  rank.

  Using the Sherman-Morrison formula ~\citep[pg.~487]{Hig02} for rank 1
  updates to a matrix, we need to verify that
  \[
    1- \vec( \Adj(A)^T)^T \left[\left( \Adj(A)^T\right)^{-1} \otimes
      I\right]\vec(I) \neq 0,
  \]
in order to ensure that \eqref{eqn:adjoint-rank1-update} has full rank.
  We have that
  \begin{align*}
    \vec( \Adj(A)^T)^T  \left[\left( \Adj(A)^T\right)^{-1} \otimes 
    I\right]\vec(I) &= \vec(\Adj(A)^T)^T \vec\left( \Adj(A)^T)^{-1} \right) \\
                    &=  \Tr \left(\Adj(A)^T \left(\Adj(A)^T \right)^{-1} \right)\\
                    &= n,
  \end{align*}
  thus \eqref{eqn:adjoint-rank1-update} has full rank.  Note we used the
  identities for matrices $X, Y$ and $Z$ of appropriate dimension, that
  $\vec(XYZ) = (Z^T \otimes X) \vec(Y)$ and
  $\vec(X^T)^T \vec(Y) = \Tr(XY)$.  Again, we have that
  \[ \Phi_{d}(\pvec(I) \pvec(\Adj(\A)^T)^T) - \Phi_{d}(\Adj(\A)^T\otimes
    I)\] has full rank, thus $J_{\Adj}$ is a product of two matrices of
  full rank, so $J_{\Adj}$ must also have full rank.
\end{proof}

Corollary~\ref{cor:adjoint-derivative-rank} implies that Lagrange multipliers 
will exist to several optimization problems involving the adjoint matrix as a 
constraint, since the Jacobian matrix  of the adjoint has full rank.
The linear independent constraint qualification or the constant rank constraint
qualification will hold for several optimization problems of the form
\[
  \min \norm{\Delta \A} \text{~~subject to~~} \Adj(\A+\Delta \A) = \pmat{F},
\]
for some reasonably prescribed $\pmat{F} \in \RR[t]^{n\times n}$.

\begin{remark}
  If $\A$ is rank deficient, then the derivative is still defined, but not
  necessarily by Theorem~\ref{thm:matrix-poly-adjoint-2nd-derivative}.  If
  $\rank(\A) \leq n-3$ then $J_{\Adj} =0$, since all $(n-3) \times (n-3)$
  minors vanish ($J_{\Adj}$ consists of the coefficients of these
  minors). If $\rank(\A)= n-1$ or $\rank(\A) = n-2$ then $J_{\Adj}$ is
  still defined and in both cases $J_{\Adj}\neq 0$. However $J_{\Adj}$ is
  not necessarily described by
  Theorem~\ref{thm:matrix-poly-adjoint-2nd-derivative}.
\end{remark} 
For several affine or linear perturbation structures (such as ones that
preserve the degree of entries or the support of entries),
Theorem~\ref{thm:matrix-poly-adjoint-2nd-derivative} and the associated
Corollary~\ref{cor:adjoint-derivative-rank} will hold (after deleting some
extraneous rows or columns).

 \section{When Does a Numerical Matrix Polynomial have a trivial SNF?}
\label{sec:trivial-snf}

In this section we consider the question of determining if a matrix
polynomial has a non-trivial SNF, or rather how much do the coefficients
need to be perturbed to have a non-trivial SNF.  We provide a lower bound
on this distance by analyzing the distance to a reduced-rank generalized
Sylvester matrix.

\subsection{Embeddings into generalized Sylvester matrices and approximate GCDs}
In the introduction we demonstrated that some nearby non-trivial Smith Forms are
unattainable. In this subsection we investigate why these unattainable
values occur. We first review some basic results needed to analyze the
topology of the approximate Smith form problem.

For a matrix $\A\in\RR[x]^\nxn$, we know that
$s_{n}=\delta_{n}/\delta_{n-1}$, the quotient of the determinant and the
GCD of all $(n-1)\times (n-1)$ minors.  Since these minors are precisely
the entries of the adjoint matrix, it follows that $\A$ has a non-trivial
Smith form if and only if the GCD of all entries of the adjoint is
non-trivial, that is, $\deg (\gcd(\{\Adj(\A)_{ij} \} ) )\geq 1$. In order
to obtain bounds on the distance to a matrix having a non-trivial Smith
form, we consider an approximate GCD problem of the form
\[
  \min \left\{\norm{\Delta \A} \text{~~subject to~~} \deg
    \left(\gcd\left\{\Adj\left(\A+\Delta \A\right)_{ij} \right)\right\}
    \neq 1\right\}.
\]\todo{I changed this.}
If this was a classical approximate GCD problem, then the use of
Sylvester-like matrices would be sufficient. However, in our problem the
degrees of the entries of the adjoint may change under perturbations. In
order to perform an analysis, we need to study a family of generalized
Sylvester matrices that allow higher-degree zero coefficients to be
perturbed.

The computation of the GCD of many polynomials is typically embedded into a
scalar matrix problem using the classical Sylvester matrix. However, in our
case we want to look at GCDs of nearby polynomials but with the added
wrinkle that the degrees of the entries of the individual polynomials may
change under perturbations. In order to perform such an analysis, we need
to study a family of generalized Sylvester matrices that allow
higher-degree zero coefficients to be perturbed.

Let $\f = (f_1,\ldots,$ $f_k) \in \RR[t]^k$ be a vector of polynomials with
degrees $\dd = (d_1,\ldots, d_k)$ ordered as $d_j \geq d_{j+1}$ \todo{ fixed 
the order. We wanted the mto be decreasing. oops. }for
$1\leq j \leq k-1$.  Set $d=d_1$ and $\ell = \max(d_2,\ldots, d_k)$ and
suppose that for each $i\in\{2,\ldots, k\}$ we have
$f_i = \sum_{1\leq j\leq \ell} f_{ij} t^j$.
\begin{definition}[Generalized Sylvester Matrix]
  The \emph{generalized Sylvester matrix} of ${\bf{f}}$ is defined as
  \[
    \Syl(\f) = \Syl_{\dd}(\f) = \begin{pmatrix}
      \conv_{\ell-1}(f_1)^T \\
      \conv_{d-1}(f_2)^T \\
\vdots  \\
      \conv_{d-1}(f_k) ^T
    \end{pmatrix}\in \RR^{ (\ell + (k-1)d) \times (\ell+d)}.
  \]
\end{definition}

Some authors, e.g., \citep{FatKar03,VarSto78}, refer to such a matrix as an
expanded Sylvester matrix or generalized resultant matrix. The generalized
Sylvester matrix has many useful properties pertaining to the B\'ezout
coefficients. However, we are only concerned with the well known result
that $\gcd(\f)=\gcd(f_1,\ldots,f_k)=1$ if and only if $\Syl_{\dd}(\f)$ has full rank
\citep{VarSto78}.

Sometimes  treatreating a polynomial of degree $d$ as one of
larger degree is useful.
This can be accomplished by constructing a similar matrix
and padding rows and columns with zero entries.  The generalized Sylvester
matrix of degree at most $\dd^\prime\geq \dd$ (component-wise) of $\f$ is
defined analogously as $\Syl_{\dd^\prime}(\f)$, taking $d$ to be the
largest degree entry and $\ell$ to be the largest degree of the remaining
entries of $\dd^\prime$. Note that $\ell=d$ is possible and typical. If the
entries of $\f$ have a non-trivial GCD (that is possibly unattainable)
under a perturbation structure $\Delta \f$, then it is necessary that
$\Syl_{\dd^\prime}(\f)$ is rank deficient, and often this will be
sufficient.

If we view the entries of $\f$ as polynomials of degree $\dd^\prime$ and
$d^\prime_i > d_i$ for all $i$, then the entries of $\f$ have an
unattainable GCD of distance zero, typically of the form
$1+\varepsilon t \sim t+\varepsilon^{-1}$. In other words, the underlying
approximate GCD problem is ill-posed in a sense that the solution is 
unattainable. In order to study the theory of unattainable GCD's, sometimes 
referred to as GCD's at infinity, we need to study the notion of a degree 
reversed polynomial.

\begin{lemma}
  \label{lem:sylvester-finite-detection-strong}
  If $\max(\dd) = \max(\dd^\prime)$ then $\Syl_{\dd}(\f)$ has full rank if
  and only if and $\Syl_{\dd^\prime}(\f)$ has full rank.
\end{lemma}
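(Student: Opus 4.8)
The plan is to turn the two-sided statement into a single-step comparison and then read the rank off the block structure of the convolution matrices. First I would observe that $\Syl_{\dd}(\f)$ depends on $\dd$ only through the pair $(d,\ell)$, where $d=\max(\dd)=\deg f_1$ and $\ell=\max(d_2,\ldots,d_k)$: in forming $\Syl_{\dd}(\f)$ the trailing polynomials $f_2,\ldots,f_k$ are padded with zero coefficients up to degree $\ell$ and $f_1$ is taken at degree $d$, so two degree vectors with the same $(d,\ell)$ give literally the same matrix. Since $\dd'\geq\dd$ componentwise and $\max(\dd')=\max(\dd)=:d$, the leading degree is not overstated ($d'_1=d$), and $\ell\leq\ell':=\max(d'_2,\ldots,d'_k)\leq d$; we may take $f_1$, which genuinely has degree $d$, as the leading block of both $\Syl_{\dd}(\f)$ and $\Syl_{\dd'}(\f)$ (should a trailing degree also equal $d$, a different choice of leading polynomial merely permutes block rows and leaves the rank unchanged). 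Hence it suffices to prove, for each $e$ with $\ell\leq e\leq d-1$, that the generalized Sylvester matrix built with leading degree $d$ and secondary degree $e+1$ has full rank if and only if the one built with secondary degree $e$ does; the lemma then follows by induction from $e=\ell$ up to $e=\ell'$, the base case $\ell'=\ell$ being vacuous since the two matrices then coincide.

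For the single step I would compare $\Syl_{(d,e+1)}(\f)$ with $\Syl_{(d,e)}(\f)$ entry by entry. Raising the secondary degree from $e$ to $e+1$ enlarges the matrix by exactly one column, the one indexed by the new top product degree $d+e$, and by exactly one row, the multiplier-coefficient row $(b_1)_e$ inside the leading block $\conv_{e}(f_1)^T$. Using $\deg f_1=d$ and $\deg f_i\leq\ell\leq e$ for $i\geq 2$, I would verify the bookkeeping: (i) every trailing block $\conv_{d-1}(f_i)^T$ contributes only zeros to the new column, since the coefficients of $f_i$ feeding product degree $d+e$ all have index $>e\geq\deg f_i$; (ii) the old rows of the leading block likewise contribute only zeros to the new column, as those entries would be coefficients $(f_1)_j$ with $j>d$; (iii) deleting the new row and the new column from $\Syl_{(d,e+1)}(\f)$ returns exactly $\Syl_{(d,e)}(\f)$; and (iv) the entry in the new row and new column is the leading coefficient $(f_1)_d\neq 0$.

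From (i)--(iv), the new column of $\Syl_{(d,e+1)}(\f)$ has a single nonzero entry, namely $(f_1)_d$ in the new row, so clearing that row by elementary column operations gives $\rank\Syl_{(d,e+1)}(\f)=1+\rank\Syl_{(d,e)}(\f)$, while the column count also grows by exactly one. Since $k\geq 2$ both matrices have at least as many rows as columns, so ``full rank'' means full column rank; therefore one matrix is full rank exactly when the other is, and iterating over $e$ from $\ell$ to $\ell'$ yields the claim.

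I expect the only real work to be the index bookkeeping in steps (iii)--(iv): making precise that padding a secondary polynomial past its true degree merely appends zero rows to its convolution block, and that the single genuinely new column meets only one nonzero entry --- the leading coefficient of $f_1$ --- once the new row is taken into account. After that the rank statement is a one-line argument, and the reduction in the first paragraph disposes of the general case.
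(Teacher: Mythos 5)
Your proof is correct, but it takes a genuinely different route from the paper's. The paper disposes of the lemma in one line by invoking the main theorem of Vardulakis and Stoyle (1978) — essentially using the fact that any admissible generalized Sylvester matrix of $\f$ has full rank if and only if $\gcd(\f)=1$, so the property transfers across degree vectors automatically. You instead give a direct, self-contained rank argument: you first observe that $\Syl_{\dd}(\f)$ depends on $\dd$ only through the pair $(d,\ell)$, then reduce to the single-step comparison of $\Syl_{(d,e)}(\f)$ and $\Syl_{(d,e+1)}(\f)$, and show by careful index bookkeeping that raising the secondary degree by one appends exactly one row and one column whose only nonzero intersection is the leading coefficient $(f_1)_d$, so the rank increases by exactly one while the column count also increases by one, preserving full column rank in both directions. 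Your bookkeeping is right: the new column (product degree $d+e$) is annihilated by the trailing blocks since $\deg f_i\leq\ell\leq e$ for $i\geq 2$, and by the old rows of the leading block since $(f_1)_{d+e-j}=0$ for $j<e$, leaving only the pivot $(f_1)_d\neq 0$ in the new row; you also correctly flag that $f_1$ must be taken as the leading block since the trailing polynomials, even if padded to degree $d$, would have zero leading coefficient. What the paper's route buys is brevity via an existing citation; what yours buys is an elementary, verifiable proof that does not depend on the exact statement of the external resultant theorem and in fact gives the sharper information that the rank grows by exactly one column per increment of $\ell$.
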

\begin{proof}
  Let $d$ and $\ell$ be the largest and second largest entries of $\dd$ and
  $\ell^\prime$ be the second largest entry of $\dd^\prime$.  The result
  follows from the main theorem of \cite{VarSto78} by considering the case
  of $\ell^\prime = d$.
\end{proof}

This lemma characterizes the (generic) case when elements of maximal degree
of
$\f$ do not change under perturbations, in which case the generalized Sylvester
matrix still meaningfully encodes GCD information.  However, it is possible
that $\Syl_{\dd}(\f)$ has full rank and
$\Syl_{\dd^\prime}(\f)$ is rank deficient but the distance to a non-trivial
GCD is not zero.  This can occur when $d_j = d^\prime_j$ for some
$j$ and $\dd^\prime \geq
\dd$. To understand the most general case, we need to look at generalized
Sylvester matrices of the reversal of several polynomials.

\begin{definition}
  The \emph{degree $d$ reversal} of $f\in \RR[t]$ of degree at most $d$
  is defined as $\rev_d(f)= t^df(t^{-1})$. For a vector of polynomials
  $\f \in \RR[t]^{k}$ of degrees at most $\dd=(d_1,\ldots,d_k)$ the
  \emph{degree $\dd$ reversal} of $\f$ is the vector
  $\rev_{\dd}(\f) = (\rev_{d_1} (f_1),\ldots, \rev_{d_k} (f_k))$.
\end{definition}

The following theorem enables us to determine if unattainable solutions are
occurring in an approximate GCD problem with an arbitrary (possibly
non-linear) structure on the coefficients.

\begin{theorem}
  \label{thm:sylvester-reversal}
Let $\f$ be a vector of non-zero polynomials of degree at most $d$.
  Suppose that $\Syl_\dd(\f)$ has full rank and $\Syl_{\dd^\prime}(\f)$ is
  rank deficient, where the perturbations $\Delta \f$ have degrees at most
  $\dd^\prime$ and the entries of $\f$ have degrees $\dd$.  Then $\f$ has
  an {\emph{unattainable}} non-trivial GCD of distance zero under the
  perturbation structure $\Delta \f$ if and only if
  $ \Syl (\rev_{\dd^\prime} (\f))$ is rank deficient.
\end{theorem}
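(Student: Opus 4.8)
The plan is to characterize unattainability by tracking what happens to a minimizing sequence of perturbations and passing to the limit after a degree-reversal change of coordinates. First I would set up the basic dichotomy: since $\Syl_{\dd}(\f)$ has full rank, the GCD of $\f$ (viewed with the true degrees $\dd$) is trivial, so any nearby vector with a non-trivial GCD must be a genuine perturbation; and since $\Syl_{\dd'}(\f)$ is rank deficient, there do exist perturbations $\Delta\f$ (with degrees bounded by $\dd'$) of arbitrarily small norm whose sum $\f+\Delta\f$ has a non-trivial GCD — this is exactly the ``GCD at infinity'' phenomenon, witnessed by a null vector of $\Syl_{\dd'}(\f)$. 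So the distance is already known to be zero; the content of the theorem is that the infimum is \emph{attained} (i.e. some genuine $\Delta\f$ with $\|\Delta\f\|$ possibly zero forces a non-trivial GCD) precisely when $\Syl(\rev_{\dd'}(\f))$ has \emph{full} rank, equivalently unattainable precisely when $\Syl(\rev_{\dd'}(\f))$ is rank deficient.

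The key step is the following correspondence. A non-trivial GCD of $\f+\Delta\f$ has a root somewhere in $\CC\cup\{\infty\}$; the ``unattainable'' case is exactly the one where, along a minimizing sequence $\Delta\f^{(m)}\to 0$, the common root $\omega_m$ escapes to $\infty$ (if the common roots stayed in a compact set we could extract a convergent subsequence and obtain a genuine attained common root of $\f$ itself, contradicting $\Syl_{\dd}(\f)$ full rank). Under the substitution $t\mapsto 1/t$ and multiplication by the appropriate powers $t^{d_i'}$, the vector $\f$ with degree bounds $\dd'$ becomes $\rev_{\dd'}(\f)$, and a common root escaping to $\infty$ for the $\Delta\f^{(m)}$-perturbed system corresponds to a common root of the reversed system converging to $0$. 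Thus: $\f$ has an unattainable non-trivial GCD of distance zero under the structure $\Delta\f$ \emph{iff} the reversed vector $\rev_{\dd'}(\f)$ can be perturbed (within the corresponding reversed structure) to have a common root arbitrarily close to $t=0$. Now I would invoke the characterization via the generalized Sylvester matrix: $\rev_{\dd'}(\f)$ having such nearby common roots at (or accumulating at) a point is equivalent to $\Syl$ of the reversed vector being rank deficient — because $\Syl(\rev_{\dd'}(\f))$ rank deficient means $\rev_{\dd'}(\f)$ itself has a non-trivial GCD (by the $\Syl$ full-rank $\iff$ coprime result of \citep{VarSto78}), and the surviving common factor, after reversing back, corresponds to the root at infinity. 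Conversely if $\Syl(\rev_{\dd'}(\f))$ has full rank then $\rev_{\dd'}(\f)$ is genuinely coprime with no root at $0$, and one argues the infimum must then be attained: any perturbation driving the GCD non-trivial keeps the common root in a compact subset of $\CC$, a compactness/limit argument produces an attained minimizer.

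The main obstacle is making the limiting argument rigorous when $\Delta\f$ is allowed an arbitrary (possibly non-linear) structure: I cannot use linearity of the constraint set, only that it is a set of admissible perturbations closed under the relevant limits, so the compactness argument has to be phrased carefully — extract a subsequence along which $\Delta\f^{(m)}\to\Delta\f^\infty$ (norms are bounded since they tend to $0$) and along which the common roots $\omega_m$ either converge in $\CC$ or diverge; in the convergent case $\f+\Delta\f^\infty$ has an honest common root so the GCD is non-trivial and the infimum is attained, in the divergent case we are forced into the reversed picture. I would also need to handle bookkeeping about which degree coefficients are genuinely free versus pinned (the entries where $d_j = d_j'$), since those are exactly the cases, flagged in the discussion preceding the theorem, where $\Syl_{\dd'}(\f)$ rank deficiency does not by itself settle attainability; tracking these through the reversal is where the hypothesis ``$\f$ a vector of non-zero polynomials of degree \emph{at most} $d$'' and the precise definition of $\rev_{\dd'}$ do the work. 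Everything else — the explicit substitution $t\mapsto 1/t$, the identification of the transformed Sylvester matrix, and the coprimality criterion — is routine once this limiting dichotomy is in place.
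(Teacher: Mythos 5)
Your central setup claim is wrong, and it is wrong in exactly the way the theorem is designed to guard against. You assert that because $\Syl_{\dd'}(\f)$ is rank deficient, ``there do exist perturbations $\Delta\f$ \dots of arbitrarily small norm whose sum $\f+\Delta\f$ has a non-trivial GCD,'' so ``the distance is already known to be zero.'' That is false: rank deficiency of $\Syl_{\dd'}(\f)$ can be an artifact of over-padding and carries no guarantee about the infimum. Concretely, take $\f=(t^2-1,\,t^2+1,\,t)$ with true degrees $\dd=(2,2,1)$ and $\dd'=(3,3,1)$ (so $f_3$ is pinned). Then $\gcd(\f)=1$, $\Syl_{\dd}(\f)$ has full rank, and $\Syl_{\dd'}(\f)$ is rank deficient because its last column is identically zero; yet no arbitrarily small $\dd'$-structured perturbation makes the three entries share a root, since $f_3+\Delta f_3$ keeps a bounded root near $0$ while $f_1(0),f_2(0)\neq0$. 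Here $\gcd(\rev_{\dd'}(\f))=\gcd(t-t^3,\,t+t^3,\,1)=1$, so $\Syl(\rev_{\dd'}(\f))$ has full rank, in agreement with the theorem — the distance is strictly positive, not ``zero but unattained.'' So the theorem is not distinguishing ``attained vs.\ unattained infimum'' as you frame it; under the stated hypotheses the infimum can never be attained at zero (that would force $\gcd(\f)\neq1$), and the real dichotomy is ``distance zero (hence necessarily unattained)'' versus ``distance positive.'' Your later sentence acknowledging that pinned indices ($d_j=d_j'$) make $\Syl_{\dd'}$ rank deficiency inconclusive directly contradicts your opening claim; that inconsistency needs to be removed, not merely flagged as a bookkeeping concern.

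Beyond that framing error, the working machinery you propose — a minimizing-sequence/compactness argument showing that if the distance were zero the common roots must escape to $\infty$, combined with the $t\mapsto1/t$ reversal identifying a root at $\infty$ with a common root at $0$ of $\rev_{\dd'}(\f)$, detected via $\Syl(\rev_{\dd'}(\f))$ by the Vardulakis–Stoyle criterion — is essentially the same idea the paper uses, just spelled out in more detail than the paper's very terse proof. The converse direction (rank deficiency of $\Syl(\rev_{\dd'}(\f))$ implies distance zero) also needs the observation that, since $\gcd(\f)=1$, the only possible nontrivial factor of $\gcd(\rev_{\dd'}(\f))$ is a power of $t$, forcing $d_i'>d_i$ for every $i$, which is what licenses the construction of small perturbations with a common root tending to $\infty$; your proposal gestures at this but does not pin it down. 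Fix the opening dichotomy so the two cases are ``infimum $=0$'' versus ``infimum $>0$,'' and the rest of your outline can be made to work along the same lines as the paper.
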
 
\begin{proof}
  Suppose that $\Syl( \rev_{\dd^\prime} (\f))$ has full rank.  Then
  $\gcd(\rev_{\dd^\prime}(\f))=1$,
hence $\f$ does not have an unattainable non-trivial GCD, since
  $\gcd(\f) =1$.  Conversely, suppose that $\Syl( \rev_{\dd^\prime} (\f))$
  is rank deficient. Then, $t$ is a factor of
  $\gcd( \rev_{\dd^\prime}(\f))$ but $t$ is not a factor of
  $\gcd(\rev_\dd (\f))$. Accordingly, all entries of
  $\f+\Delta \f$ may increase in degree and so the distance of $\f$ having
  a non-trivial GCD is zero, and so is unattainable.
\end{proof}

If the generalized Sylvester matrix of $\f$ has full rank, but the
generalized Sylvester matrix that encodes the perturbations $\f+\Delta \f$
is rank deficient, then either there is an unattainable GCD, or the
generalized Sylvester matrix is rank deficient due to over-padding with
zeros.  Theorem~\ref{thm:sylvester-reversal} provides a reliable way to
detect this over-padding.

\todo{We need to say how we handle structured perturbations}
\begin{definition}
  We say that $\A$ has an \emph{unattainable non-trivial Smith form} if
  $\gcd (\Adj(\A))=1$ and $\gcd(\Adj(\A+\widetilde \Delta \A))\neq 1$ for
  an arbitrarily small perturbation
$\widetilde \Delta \A =\widetilde\Delta (\Delta \A)$ of some prescribed 
affine structure.
\end{definition}
Note that $\widetilde \Delta \A$ just means that perturbations to $\A$ are 
structured as an affine function of $\Delta \A$. \todo{I changed this shit or 
something.}
It is important
to carefully consider structured perturbations, because some matrix polynomials have an
unattainable non-trivial SNF under unstructured perturbations, but have an
attainable non-trivial SNF under structured perturbations (perturbations
that preserve the degree of entries or support of entries are structured).
Solutions that cannot be attained correspond to an eigenvalue at infinity
of $\A$ with a non-trivial spectral structure.  Such examples are easily
constructed when $\det(\A)$ or $\Adj(\A)$ have non-generic degrees.

\begin{example}
  \label{ex:unattainable}
  Let \[\A = \begin{pmatrix}
      t & t-1 \\
      t+1 & t
    \end{pmatrix} \in \RR[t]^{2\times 2} \text{~~and~~} \pmat{C} =
    \begin{pmatrix}
      \A \\
      & \A
    \end{pmatrix} \in \RR[t]^{4\times 4}.\] Then $\pmat{C}$ has an unattainable 
non-trivial
  Smith form if all perturbations to $\A$ are support or degree preserving
  (i.e. they preserve zero entries or do not increase the degree of each 
entry), both
  linear structures. Note that $\A$ and $\pmat{C}$ are both
  unimodular. However small perturbations to the non-zero coefficients of
  $\A$ make $\A+\Delta \A$ non-unimodular.
  
 The Smith
  form of $\rev(\pmat{C}) = t \pmat{C}|_{t=t^{-1}}$ is
  \[ \SNF(\rev(\pmat{C})) = \begin{pmatrix}
      1 \\
      & 1 \\
      & & t^2 \\
      & & & t^2
    \end{pmatrix},
  \]
which   implies that the eigenvalue at infinity of $\A$ has a non-trivial
  spectral structure. The eigenvalue at infinity having a non-trivial
  spectral structure implies that the SNF of $\pmat{C}$ is unattainable.
  Note that this is equivalent to saying that $\pmat{C}$ has a non-trivial 
Smith-McMillan form.
\end{example}

These examples are non-generic. Generically, the degree of all entries in
the adjoint will be $(n-1)d$ and will remain unchanged locally under
perturbations to the coefficients. Computing the distance
to the nearest matrix polynomial with a non-trivial Smith form under a
prescribed perturbation structure can be formulated as finding the nearest rank deficient
(structured) generalized Sylvester matrix of the adjoint or the
$\dd^\prime$ reversal of the adjoint.

\subsection{Nearest Rank Deficient Structured Generalized Sylvester Matrix}
Suppose that $\A \in \RR[t]^{n\times n}$ of degree at most $d$ has a
trivial Smith form and does not have an unattainable non-trivial Smith
form.  Then one method to compute a lower bound on the distance the entries
of $\A$ need to be perturbed to have an attainable or unattainable
non-trivial Smith form is to solve
\begin{equation}
  \label{eqn:generalized-sylvester-opt}
  \inf \norm{\Delta \A} \text{~~subject to~~} 
  \begin{cases} 
    ~\rank(\Syl_{\dd^\prime}( \Adj(\A+ \widetilde \Delta \A)))  <e,\\
    ~e=\rank(\Syl_{\dd^\prime}(\Adj(\A))).
  \end{cases}
\end{equation}
Here $\dd^\prime$ is the vector of the largest possible degrees of each entry
of $\Adj(\A+\widetilde \Delta \A)$, and $\widetilde \Delta \A)$ is a prescribed 
linear or
affine perturbation structure.
\todo{I made a change here... typo was fixed as well. }

It is sufficient to compute $\max (\dd^\prime)$, a quantity which will  generically be $(n-1)d$. For non-generic instances we require the
computation of $\dd^\prime$.  This optimization problem is non-convex, but
multi-linear in each coefficient of $\Delta \A$.

We do not attempt to solve this problem directly via numerical techniques,
since it enforces a necessary condition that is often sufficient.  Instead
we use it to develop a theory of solutions which can be exploited by faster
and more robust numerical methods.

\begin{lemma}\label{lem:gen-sylvester-open-set}
  Let $\f$ be a vector of polynomials with degrees $\dd$ and admissible
  perturbations $\Delta \f$ of degrees $\dd^\prime$ where
  $\max(\dd) \leq \max(\dd^\prime)$. Then the family of generalized
  Sylvester matrices $\Syl_{\dd^\prime}(\f)$ of rank at least $e$ form an
  open set under the perturbations $\Delta \f$.
\end{lemma}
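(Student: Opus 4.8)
The plan is to argue that rank is lower semicontinuous on the space of coefficient data, so the set on which rank is at least a fixed value $e$ is open; the only wrinkle is to confirm that the map $\Delta\f \mapsto \Syl_{\dd'}(\f+\Delta\f)$ is continuous in the relevant coordinates, which it is because each entry of the generalized Sylvester matrix is one of the coefficients of one of the $f_i$ (or a zero, by the padding convention), hence depends linearly—and in particular continuously—on $\vec(\Delta\f)$. Concretely, I would fix $\f$ and a perturbation $\Delta\f$ with $\Syl_{\dd'}(\f+\Delta\f)$ of rank at least $e$, and exhibit an $\varepsilon>0$ such that every admissible $\Delta\f'$ within $\varepsilon$ (in $\norm{\cdot}$) of $\Delta\f$ yields $\Syl_{\dd'}(\f+\Delta\f')$ of rank at least $e$.

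First I would recall the standard fact that if a matrix $M$ has rank at least $e$, then some $e\times e$ minor of $M$ is nonzero; since the determinant of that submatrix is a polynomial (hence continuous) function of the entries of $M$, it remains nonzero on a neighbourhood of $M$, so every matrix sufficiently close to $M$ also has rank at least $e$. This says the set $\{M : \rank M \ge e\}$ is open in $\RR^{(\ell+(k-1)d)\times(\ell+d)}$. Second, I would note that the assignment $\Delta\f \mapsto \Syl_{\dd'}(\f+\Delta\f)$ is an affine map from the (finite-dimensional) space of admissible perturbations—parametrised by $\vec(\Delta\f)$—into that matrix space: each row of $\Syl_{\dd'}$ is a $\conv$ row built from the coefficients of a single $f_i + \Delta f_i$, padded with structural zeros up to degree $d_i'$, so every entry is either a fixed $0$ or a coordinate of $\vec(\f)+\vec(\Delta\f)$. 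An affine map is continuous, and the coefficient-norm $\norm{\cdot}$ we use is equivalent to the Euclidean norm on $\vec(\Delta\f)$.

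Composing the two observations finishes the argument: the preimage under a continuous map of the open set $\{M : \rank M \ge e\}$ is open, so $\{\Delta\f : \rank \Syl_{\dd'}(\f+\Delta\f) \ge e\}$ is open in the space of admissible perturbations, which is exactly the claim. If the admissible perturbations are further constrained to an affine subspace (a linear/affine coefficient structure as in the Affine/Linear Structure definition), the same conclusion holds relative to that subspace, since the restriction of a continuous map to a subspace is still continuous and openness is then meant in the subspace topology. The only point requiring any care—and the closest thing to an obstacle—is bookkeeping the padding: one must check that the hypothesis $\max(\dd)\le\max(\dd')$ guarantees the matrix dimensions of $\Syl_{\dd'}(\f+\Delta\f)$ do not depend on $\Delta\f$ (they are determined by $\dd'$ alone), so that all the $\Syl_{\dd'}$ in question genuinely live in one fixed ambient matrix space and ``closeness'' of matrices is meaningful; this is immediate from the definition of $\Syl_{\dd'}$, which fixes the row and column counts from $\dd'$.
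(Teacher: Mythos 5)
Your argument is correct and follows essentially the same route as the paper's: both reduce the claim to the lower-semicontinuity of matrix rank composed with the continuous (affine) dependence of $\Syl_{\dd'}(\f+\Delta\f)$ on $\Delta\f$, with the degree hypothesis guaranteeing fixed ambient dimensions. The only cosmetic difference is how openness of the rank-$\geq e$ set is certified: you use a nonzero $e\times e$ minor persisting under small perturbations, whereas the paper gives the explicit quantitative neighbourhood $\norm{\Syl_{\dd'}(\Delta\f)}_2 < \sigma_e(\Syl_{\dd'}(\f))$ via the $e$-th singular value; both are standard and interchangeable here.
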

\begin{proof}
  By the degree assumption on $\Delta \f$ we have that for an infinitesimal
  $\Delta \f$ that $\Syl_{\dd^\prime}( \f)$ and
  $\Syl_{\dd^\prime}( \Delta \f)$ have the same dimension.  Accordingly,
  let us suppose that $\Syl_{\dd^\prime}( \f)$ has rank at least $e$.  Then
  the Sylvester matrix in question must have rank at least $e$ in an 
open-neighborhood around it. In  particular, when
  $\norm{\Syl_{\dd^\prime}(\Delta \f)}_2 < \sigma_{e}(\Syl_{\dd^\prime}(
  \f))$ then
  $\rank (\Syl_{\dd^\prime} (\f+\Delta \f)) \geq \rank (\Syl_{\dd^\prime}(
  \f))$ and the result follows.
\end{proof}

\begin{theorem}\label{thm:distance-well-posed}
  The optimization problem \eqref{eqn:generalized-sylvester-opt} has an
  attainable global minimum under linear perturbation structures.
\end{theorem}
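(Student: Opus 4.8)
The plan is to exhibit the feasible region of \eqref{eqn:generalized-sylvester-opt} as a nonempty closed set and the objective as a coercive continuous function, so that a standard compactness argument (Weierstrass) yields an attained minimum. First I would set up coordinates: a linear (or affine) perturbation structure means $\Delta\A$ ranges over a fixed finite-dimensional real vector space, parameterized by coefficients $\mathbf{x}\in\RR^N$, and $\widetilde\Delta\A$ is an affine function of $\mathbf{x}$. The map $\mathbf{x}\mapsto \Syl_{\dd^\prime}(\Adj(\A+\widetilde\Delta\A))$ is polynomial in $\mathbf{x}$, hence continuous, since the entries of $\Adj(\cdot)$ are $(n-1)\times(n-1)$ minors (polynomials in the coefficients) and $\Syl_{\dd^\prime}$ just arranges their coefficients into a fixed-size matrix with zero padding (here it is essential that $\dd^\prime$ is a fixed degree vector, chosen as the largest possible degrees, so the ambient matrix size does not vary with $\mathbf{x}$). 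The objective $\mathbf{x}\mapsto\norm{\widetilde\Delta\A}_F$ is a norm composed with an affine map, hence continuous and coercive on the subspace on which it is a genuine norm; on the (possibly larger) parameter space $\RR^N$ it is a continuous seminorm whose sublevel sets, intersected with the feasible set, I will argue are bounded.

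The key step is to show the feasible set $\mathcal{F}=\{\mathbf{x} : \rank\Syl_{\dd^\prime}(\Adj(\A+\widetilde\Delta\A))<e\}$ is closed. This is immediate: rank being strictly less than $e$ is equivalent to the vanishing of all $e\times e$ minors of the (continuously varying) matrix, so $\mathcal{F}$ is the zero set of a finite collection of continuous functions, hence closed. Nonemptiness of $\mathcal F$ follows from the standing hypothesis of the theorem's setting: $\A$ has a trivial Smith form and does not have an unattainable non-trivial Smith form, so by the discussion preceding the theorem (and Theorem~\ref{thm:sylvester-reversal}) there genuinely is a perturbation achieving the rank drop — but for a clean self-contained argument I would instead simply observe that, were $\mathcal F$ empty, the infimum would be $+\infty$ and the statement vacuous/trivial, so we may assume $\mathcal F\neq\emptyset$ and pick any $\mathbf{x}_0\in\mathcal F$. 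Then $\inf_{\mathcal F}\norm{\cdot}\le \norm{\widetilde\Delta\A(\mathbf x_0)}=:M<\infty$, and it suffices to minimize over the closed set $\mathcal F\cap\{\mathbf x:\norm{\widetilde\Delta\A(\mathbf x)}_F\le M\}$.

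The one genuine subtlety — and the step I expect to be the main obstacle — is that this last set need not be compact when the affine map $\mathbf{x}\mapsto\widetilde\Delta\A(\mathbf x)$ has a nontrivial kernel, i.e.\ when the parameterization is redundant. I would handle this by quotienting: replace $\RR^N$ by $\RR^N/\ker$, on which $\norm{\widetilde\Delta\A(\cdot)}_F$ descends to a genuine norm, so sublevel sets there are bounded, hence (being also closed) compact; the feasible set is invariant under the kernel and descends to a closed subset of the quotient, and the objective depends only on the class. Applying Weierstrass to the continuous objective on the nonempty compact set $\overline{\mathcal F}\cap\{\norm{\cdot}_F\le M\}$ in the quotient yields a minimizer; lifting it back gives a $\Delta\A$ attaining the infimum in \eqref{eqn:generalized-sylvester-opt}, which completes the proof. (If one prefers to avoid the quotient, one can equally note that among all lifts of a minimizing class the minimum-norm lift exists by closedness of the affine fiber, which is all that is needed.)
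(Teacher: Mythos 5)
Your overall architecture matches the paper's: you identify a closed feasible set, restrict to a bounded sublevel set, and invoke Weierstrass. The closedness argument via vanishing of $e\times e$ minors is correct and morally equivalent to the paper's appeal to Lemma~\ref{lem:gen-sylvester-open-set}, and your quotienting to handle a possibly redundant parameterization is a valid (if slightly heavier) technicality that the paper simply side-steps.

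The genuine gap is in your treatment of non-emptiness of the feasible set, which is precisely where the hypothesis of a \emph{linear} perturbation structure does its work. Your first suggestion --- that non-emptiness ``follows from the standing hypothesis that $\A$ does not have an unattainable non-trivial Smith form'' --- does not hold: that hypothesis rules out pathological infima when a rank drop occurs arbitrarily close by; it does not assert that any admissible perturbation achieves the rank drop at all. Your fallback --- dismiss the empty case as vacuous --- also doesn't suffice, since the theorem asserts that an attainable global minimum \emph{exists}, which is false if the feasible set is empty. The paper's proof closes this hole with one observation you are missing: under a linear structure $\Delta\A$ ranges over a vector space containing $\A$, so $\Delta\A = -\A$ is always admissible, giving $\pmat{C}=\A+\Delta\A=0$. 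Then $\Adj(0)=0$ (for $n\geq 2$), so $\Syl_{\dd^\prime}(\Adj(\pmat{C}))=0$ has rank $0<e$ and is trivially feasible. This both establishes non-emptiness unconditionally and supplies the a priori bound $\norm{\Delta\A}\leq\norm{\A}$ used to get compactness, and it is the precise reason the theorem is stated for linear rather than affine structures (for affine structures the paper only claims the conditional result you were reaching for). With that single feasible point inserted, your argument becomes complete.
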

\begin{proof}
  Let $\mathcal{S}$ be the set of all rank at most $e-1$ generalized
  Sylvester matrices of prescribed shape by $\dd^\prime$ and $ \Adj(\A)$.
  Lemma~\ref{lem:gen-sylvester-open-set} implies that $\mathcal{S}$ is
  topologically closed.

  Let
  $\mathcal{R} = \{ \Syl_{\dd^\prime}( \Adj(\pmat{C})) \text{ subject to
  }\norm{\pmat{C}} \leq \norm{\A}\}$, where the generalized Sylvester
  matrices are padded with zeros to have the appropriate dimension if
  required. Since $\Delta \A$ has a linear perturbation structure, a
  feasible point is always $\mat{C} = -\A$.  By inspection $\mathcal{R}$ is
  seen to be a non-empty set that is bounded and closed.

  The functional $\norm{\cdot}$ is continuous over the non-empty closed and
  bounded set $\mathcal{S}\cap\mathcal{R}$.  Let
  $\B \in \mathcal{S}\cap\mathcal{R}$.  By Weierstrass's theorem
  $\norm{ \A -\pmat{B}}$ has an attainable global minimum over
  $\mathcal{S}\cap\mathcal{R}$.
\end{proof}
Note that if a feasible point exists under an affine perturbation
structure, then a solution to the optimization problem exists as well.
What this result says is that computing the distance to non-triviality is
generally a well-posed problem, even though 
computing a matrix polynomial of minimum distance may be ill-posed (the 
solution is unattainable).
The same results also hold when working over the $\dd^\prime$ reversed
coefficients.
A similar argument is employed by \citep[Theorem 2.1]{KYZ07}.

\subsection{Bounds on the Distance to non-triviality}
Suppose that $\A \in \RR[t]^{n\times n}$, of degree at most $d$, has a
trivial Smith form and does not have an unattainable non-trivial Smith
form.  This section provides some basic bounds on the distance coefficients
of $\A$ need to be perturbed to have a non-trivial Smith form. The bounds we 
derive are unstructured, although they can be generalized to several 
perturbation structures (such as ones that preserve the degree or support of 
entries) in a straight forward manner.

If we consider the mapping $\Adj(\cdot)$ as a vector-valued function from
$\RR^{n^2(d+1)} \to \RR^{n^2((n-1)d+1)}$ (with some coordinates possibly
fixed to zero), then we note that the mapping is locally Lipschitz. More
precisely, there exists $c>0$ such that for a sufficiently small $\Delta \A$,
\[\norm{ \Adj(\A) - \Adj(\A+\Delta \A)} \leq c \norm{\Delta \A}.\] The
quantity $c$ can be approximately bounded above by the (scalar) Jacobian matrix
$\nabla \Adj(\cdot)$ evaluated at $\A$.  A local upper bound for $c$ is
approximately $\norm{\nabla \Adj(\A)}_2$.  We can invoke
Theorem~\ref{thm:matrix-poly-adjoint-2nd-derivative} if $\A$ has full rank. By 
considering $\hat{c} = \tallnorm{\left[ \Phi_{(n-1)d} (I\otimes \A )\right] 
^+}_2$, we
obtain the (absolute) first-order {\emph{approximate}} perturbation bound
\[
  \norm{\Adj(\A) - \Adj(\A+\Delta \A)}_F
  \lesssim \hat{c}(n+\sqrt{n}) (d+1)\norm{\Adj(\A)}_F \norm{\Delta \A }_F.
\]

The entries of $\nabla \Adj(\A)$ consist of the coefficients of the
$(n-2)\times (n-2)$ minors of $\A$.  This follows because $\Adj(\cdot)$ is
a multi-linear vector mapping and the derivative of each entry is a
coefficient of the leading coefficient with respect to the variable of
differentiation.  The size of each minor can be bounded above (albeit poorly)
by Hadamard's inequality (Goldstein-Graham variant, see \citep{Los74}). As
such, we have the sequence of bounds
\begin{align*}
  \norm{\nabla \Adj(\A)}_2
  ~\leq~ n\sqrt{d+1} \norm{\nabla \Adj(\A)}_\infty 
  ~\leq~ n^3 (d+1)^{5/2} \norm{\A}_\infty^{n-2}(d+1)^{n-2} n^{(n-2)/2}, 
\end{align*}
where $\norm{\A}_\infty$ is understood to be a vector norm and
$\norm{\nabla \Adj(\A)}_\infty$ is understood to be a matrix norm. The
bound in question
can be used in conjunction with the SVD to obtain a lower bound on the
distance to a matrix polynomial with a non-trivial Smith form.

\begin{theorem}
  Suppose that $\dd^\prime = (\gamma,\gamma\ldots, \gamma)$ and
  $\Syl_{\dd^\prime}( \Adj(\A))$ has rank $e$. Then an approximate lower bound 
on the
  distance to non-triviality is
  \[ \frac{1}{\gamma\norm{\nabla
        \Adj(\A)}}_F\sigma_{e}(\Syl_{\dd^\prime}(\Adj(\A))). \]
\end{theorem}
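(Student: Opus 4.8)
The plan is to chain together three facts: (i) perturbing $\A$ by $\Delta\A$ perturbs $\Adj(\A)$ by at most $\norm{\nabla\Adj(\A)}_F\,\norm{\Delta\A}_F$ up to first order (local Lipschitz behaviour of the adjoint map, as recorded just before the statement); (ii) passing from the coefficient vector of $\Adj(\A)$ to the generalized Sylvester matrix $\Syl_{\dd'}(\Adj(\cdot))$ with $\dd'=(\gamma,\dots,\gamma)$ is a linear map each of whose nonzero entries is a coefficient of $\Adj$, and this map inflates the norm by a factor of at most $\gamma$ (since each polynomial coefficient of degree $\le (n-1)d \le \gamma$ appears in at most $\gamma$ rows/columns of the convolution blocks of width $\gamma$); and (iii) by the SVD characterization (the SVD definition in Section~\ref{sec:prelim} together with Lemma~\ref{lem:gen-sylvester-open-set}), for $\Syl_{\dd'}(\Adj(\A))$ to drop below its rank $e$ one must move it by at least $\sigma_{e}(\Syl_{\dd'}(\Adj(\A)))$ in spectral norm.

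First I would fix $\dd'=(\gamma,\dots,\gamma)$ and recall from Theorem~\ref{thm:sylvester-reversal} and the discussion around \eqref{eqn:generalized-sylvester-opt} that $\A+\Delta\A$ has a non-trivial Smith form (attainable or unattainable) only if $\Syl_{\dd'}(\Adj(\A+\widetilde\Delta\A))$ is rank-deficient relative to $e$. Then I would write, to first order,
\[
  \norm{\Syl_{\dd'}(\Adj(\A)) - \Syl_{\dd'}(\Adj(\A+\Delta\A))}_2
  \;\lesssim\; \gamma\,\norm{\Adj(\A)-\Adj(\A+\Delta\A)}_F
  \;\lesssim\; \gamma\,\norm{\nabla\Adj(\A)}_F\,\norm{\Delta\A}_F,
\]
the first inequality because the Sylvester-embedding is linear with the stated entrywise-repetition bound (and $\norm{\cdot}_2 \le \norm{\cdot}_F$), the second from the Lipschitz bound preceding the theorem. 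Combining with (iii): a successful perturbation forces
\[
  \gamma\,\norm{\nabla\Adj(\A)}_F\,\norm{\Delta\A}_F \;\gtrsim\; \sigma_{e}(\Syl_{\dd'}(\Adj(\A))),
\]
and rearranging gives the claimed approximate lower bound
\[
  \norm{\Delta\A}_F \;\gtrsim\; \frac{\sigma_{e}(\Syl_{\dd'}(\Adj(\A)))}{\gamma\,\norm{\nabla\Adj(\A)}_F}.
\]

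The main obstacle — and the reason the statement is phrased as an \emph{approximate} bound — is that two of the three links are only first-order: the adjoint Lipschitz estimate is a linearization valid for sufficiently small $\Delta\A$, so the conclusion is a local/asymptotic lower bound rather than a global one, and one should be careful that the rank of $\Syl_{\dd'}(\Adj(\cdot))$ is computed consistently (padding with zeros so dimensions match, as in Lemma~\ref{lem:gen-sylvester-open-set}) and that $\gamma\ge (n-1)d$ so no genuine coefficient is truncated. A secondary technical point is justifying the factor $\gamma$ in the embedding norm bound: each scalar coefficient of $\Adj(\A)$ occurs in a $(\gamma$-)banded Toeplitz block $\conv_{\gamma-1}(\cdot)$, hence in at most $\gamma$ entries of any row and column, so the induced operator-norm blow-up of the embedding is $O(\gamma)$; absorbing the small dimensional constants into the "$\lesssim$" is exactly what makes this the stated approximate bound rather than an exact one.
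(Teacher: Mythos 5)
Your chain of inequalities is exactly the one the paper uses: the SVD characterization of the distance to rank deficiency, a norm relation between $\Syl_{\dd'}(\f)$ and $\f$, and the first-order Lipschitz bound $\norm{\Adj(\A)-\Adj(\A+\Delta\A)}_F\lesssim\norm{\nabla\Adj(\A)}_F\norm{\Delta\A}_F$. The only cosmetic difference is in the middle step, where the paper invokes the Frobenius-norm identity $\norm{\Syl_{\dd'}(\f)}_F=\gamma\norm{\f}_F$ followed by $\norm{\cdot}_2\leq\norm{\cdot}_F$ rather than your banded-Toeplitz operator-norm estimate, but the resulting inequality and the final rearrangement are identical.
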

\begin{proof}
  We note that for polynomials $\f$ with degrees $\dd^\prime$ that
  $\norm{ \Syl_{\dd^\prime}( \f)}_F = \gamma \norm{\f}_F$.  Accordingly, if
  $\Delta \A$ is a minimal perturbation to non-triviality, then
  \begin{align*}
    \frac{1}{\gamma}\sigma_{e}(\Syl_{\dd^\prime}( \Adj(\A))) & 
                                                               \leq \norm{\Adj(\A) 
                                                               -\Adj(\A + 
                                                               \Delta \A) }_F \\
                                                             &\lesssim \norm{ 
\nabla \Adj(\A) }_F \norm{\Delta \A}_F,
  \end{align*}
  and the theorem follows by a simple rearrangement. Note that
  $\norm{\cdot}_2 \leq \norm{\cdot}_F$.
\end{proof}
If $\dd^\prime$ has different entries, then
$\ell \norm{\f}_F \leq \norm{\Syl_{\dd^\prime}( \f)}_F \leq \gamma \norm{\f}_F,$
where $\gamma$ and $\ell$ are the largest and second-largest entries of
$\dd^\prime$.  The lower bound provided can also be improved using the
Karmarkar-Lakshman distance \citep{KarLak96} in lieu of the smallest
singular value of the generalized Sylvester matrix, the $\dd^\prime$
reversal of the adjoint or other approximate GCD lower
bounds (e.g., \citep{BecLab98}).

 \section{Approximate SNF via Optimization}
\label{sec:opt1-adjoint}

In this section we formulate the approximate Smith form problem as the
solution to a continuous constrained optimization problem. We assume that
the solutions in question are attainable and develop a method with rapid
local convergence.  As the problem is non-convex, our convergence analysis
will be local.

\subsection{Constrained Optimization Formulation}

An equivalent statement to $\A$ having a non-trivial attainable Smith form
is that $\Adj(\A) = \Fstar h$ where $\Fstar$ is a vector (or matrix) of scalar
polynomials and $h$ is a divisor of $\gcd(\Adj(\A))$.  This directly leads
to the following optimization problem:
\begin{equation}
  \label{eqn:opt-prob}
  \min \norm{\Delta \pmat{A}}_F^2 \text{~~subject to~~}
  \begin{cases}
    \Adj(\pmat{A}+\Delta \pmat{A}) =\Fstar h, & \Fstar
    \mskip-5mu\in\mskip-3mu \RR[t]^{n\times
      n}, h\in \RR[t], \\
    \mathcal{N}_h\vec(h) =1, &\mathcal{N}_h \in \RR^{ 1 \times (\deg (h) +1)
    }.
  \end{cases}
\end{equation}
This is a multi-linearly structured approximate GCD problem which is a
non-convex optimization problem. Instead of finding a rank deficient
Sylvester matrix, we directly enforce that the entries of $\Adj(\pmat{A})$
have a non-trivial GCD.  The normalization requirement that
$\mathcal{N}_h \vec (h)=1$ is chosen to force $h$ to have a non-zero
degree, so that $h$ is not a scalar.  One useful normalization is to define
$\mathcal{N}_h$ such that $\lcoeff(h)=1$  (that is $\lcoeff(\cdot)$ is the 
leading coefficient of a polynomial). Explicitly, we assume the degree of the
approximate GCD is known and make it monic. Of course, other valid
normalizations also exist.

Since we are working over $\RR[t]$, there will always be a quadratic,
linear or zero factor of attainable solutions. 
If $h=0$ then the approximate SNF of $\pmat{A}$ is rank
deficient and computing approximate SNF reduces to the nearest rank at-most
$n-1$ or $n-2$ matrix polynomial problems, both of which are
well-understood \citep*{GieHarLab17,GieHarLab19}. Assuming that we are
now working in the nonzero case, we can assume generically that $\deg (h) =1$ 
or $\deg (h)=2$.

\subsection{Lagrange Multipliers and Optimality Conditions}
In order to solve our problem we will employ the method of Lagrange
multipliers. The Lagrangian is defined as
\[
  L = \norm{ \Delta\pmat{A}}_F^2 + \lambda^T
  \begin{pmatrix} \vec ( \Adj(\pmat{A}+\Delta \A) - \Fstar h)\\
    \mathcal{N}_h \vec(h) -1
  \end{pmatrix},
\]
where $\lambda$ is a vector of Lagrange multipliers.

A necessary first-order condition (KKT condition, e.g. \citep{Ber99})
for a tuple $z^\star = z^\star(\Delta \pmat{A}, \Fstar,h,\lambda)$ to be a 
regular
(attainable) minimizer is that the gradient of $L$ vanishes, that is,
\begin{equation}
  \label{eqn:first-order-nec}
  \nabla L(z^\star)=0.
\end{equation}
Let $J$ be the Jacobian matrix of the constraints defined as
\[
  J = \nabla_{\Delta \A,\Fstar,h}
  \begin{pmatrix}
    \vec(\Adj(\pmat{A}+\Delta \A)-\Fstar h)
  \end{pmatrix}.
\]
The second-order sufficiency condition for optimality at a local minimizer
$z^\star$ is that
\begin{equation}
  \label{eqn:second-order-suff}
  \ker(J(z^\star))^T \nabla^2_{xx} L(z^\star) \ker(J(z^\star)) \succ  0,
\end{equation}
that is, the Hessian with respect to $x=x(\Delta \A,\Fstar,h)$ is positive
definite over the kernel of the Jacobian of the constraints.  The vector
$x$ corresponds to the variables in the affine structure of
$\Delta \A$,$\Fstar$, and $h$.  If \eqref{eqn:first-order-nec} and
\eqref{eqn:second-order-suff} both hold, then $z^\star$ is necessarily a local
minimizer of \eqref{eqn:opt-prob}. Of course, it is also necessary that
$\ker(J(z^\star))^T \nabla^2_{xx} L(z^\star) \ker(J(z^\star)) \succeq 0$ at a
minimizer, which is the second-order necessary condition.
Our strategy for computing a local solution is to solve $\nabla L =0$ using a
Newton-like method.

\subsection{An Implementation with Local Quadratic 
Convergence}\label{ssec:fast-local-quadratic}
A problem with Newton-like methods is that when the Hessian is rank
deficient or ill-conditioned, then the Newton step becomes ill-defined or the
rate of convergence degrades. The proposed formulation of our problem can
encounter a rank deficient Hessian (this is due to over padding some
vectors with zero entries or redundant constraints). Despite this we are
still able to obtain a method with rapid local convergence under a very
weak normalization assumption.

In order to obtain rapid convergence we make use of the Levenberg-Marquart
(LM) algorithm.  If $H= \nabla^2 L$, then the LM iteration is defined as
repeatedly solving for $z^{(k+1)}=z^{(k)}+\Delta z^{(k)}$~by
\[
  (H^TH+\nu_k I) \Delta z^{(k)} = -H^T \nabla L(z^{(k)})
  \text{ where }
  z=
  \begin{pmatrix}
x \\
    \lambda
  \end{pmatrix} \in \RR^\ell,
\]
for some $\ell >0$ while using $\norm{\nabla L}_2$ as a merit function.  The
speed of convergence depends on the choice of $\nu_{k}>0$.  Note that since
LM is essentially a regularized Gauss-Newton method, when the Hessian is
rank deficient then we may converge to a stationary point of the merit
function. 
If convergence to a stationary point of the merit function is
detected, then the method of \cite{Wri05} can be used to replace LM in several 
instances.

\cite{YamFuk01} show that, under a local-error bound condition, a system of
non-linear equations $g(z)=0$ approximated by LM will converge
quadratically to a solution with a suitable initial guess.
Essentially, what this says is that to obtain rapid convergence it is
sufficient for regularity ($J$ having full rank) to hold or second-order
sufficiency, but it is not necessary to satisfy both.  Note 
that we assume Lagrange multipliers exist. However, unlike the case when $J$ has full rank, the 
multipliers need not be unique. The advantage of LM over other Newton-like methods
is that this method is globalized\footnote{Here ``globalized'' means that the method will converge to a stationary point of the merit function, not
  a local extremum of the problem.} in exchange for an extra matrix
multiplication, as $H^TH+\nu_k I$ is always positive definite, and hence
always a descent direction for the merit function.
We make the choice of $\nu_k \approx \norm{g(z)}_2$ based on the results of
\cite{FanYa05}.

\begin{definition}[Local Error Bound] \label{def:local-error-bound} Let
  $Z^\star$ be the set of all solutions to $g(z)=0$ and $X$ be a subset of
  $\RR^\ell$ such that $X\cap Z^\star\neq \emptyset$.  We say that
  $\norm{g(z)}$ provides a local error bound on $g(z) =0$ if there exists a
  positive constant $c$ such that $c \cdot \dist(z,Z^\star) \leq \norm{g(z)}$
  for all $z\in X$, where $\dist(\cdot)$ is the distance between a point
  and a set.
\end{definition}

In this section it is useful to consider $g(z) = \nabla L(z)$, as we need the 
local error bounds to estimate $\nabla L(z)=0$.

\begin{theorem}
  \label{thm:second-order-suff}
  If the second-order sufficiency condition \eqref{eqn:second-order-suff}
  \mbox{holds at} an attainable solution to \eqref{eqn:opt-prob}, then the
  local error-bound \mbox{property holds.}
\end{theorem}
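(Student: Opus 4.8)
The plan is to show that the classical second-order sufficiency condition forces a quadratic growth of the Lagrangian's gradient away from the solution set, which is exactly the local error bound of Definition~\ref{def:local-error-bound} with $g(z)=\nabla L(z)$. The standard route for this kind of statement (see the Yamashita--Fukushima and Fan--Yuan line of work cited above) is: (i) reduce the problem locally to a system where the solution set $Z^\star$ is (locally) a smooth manifold, (ii) expand $\nabla L$ to first order around a point $z^\star\in Z^\star$, and (iii) use the positive-definiteness of the Hessian restricted to $\ker J$ together with the nonsingularity of $J$ on its row space to bound $\|\nabla L(z)\|$ from below by the distance to $Z^\star$.

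\textbf{Step 1 (structure of the solution set).} First I would observe that near an attainable minimizer $z^\star=(x^\star,\lambda^\star)$ of \eqref{eqn:opt-prob}, the set $Z^\star$ of stationary points of $L$ decomposes: the $x$-component is locally isolated (this is what second-order sufficiency \eqref{eqn:second-order-suff} buys us, via the usual argument that a strict local minimizer of the reduced objective on the constraint manifold is isolated), while the $\lambda$-component may vary over the affine set of multipliers compatible with $\nabla_x L=0$ — precisely the coset $\lambda^\star+\ker(J^\tran)$, since the paper explicitly warns the multipliers need not be unique. So $Z^\star$ is locally an affine subspace (in $\lambda$) times a point (in $x$), and $\dist(z,Z^\star)^2 = \|x-x^\star\|^2 + \dist(\lambda,\lambda^\star+\ker J^\tran)^2$, i.e. $\|x-x^\star\|^2$ plus the squared norm of the projection of $\lambda-\lambda^\star$ onto $\range(J)$ (the row space of the constraint Jacobian).

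\textbf{Step 2 (lower bound on $\|\nabla L\|$).} Next I would split $\nabla L = (\nabla_x L, \nabla_\lambda L)$. The $\lambda$-block $\nabla_\lambda L$ is just the constraint residual $c(x)$ (the vectorized adjoint equation plus the normalization), which vanishes to first order like $J\,(x-x^\star)$; but more usefully, I want to control the $\range(J^\tran)$-part of the multiplier error. The $x$-block is $\nabla_x L = 2W\,\Delta x + J^\tran\lambda + (\text{h.o.t.})$ where $W$ encodes the quadratic part of $\|\Delta\A\|_F^2$ in the $x$-coordinates; expanding at $z^\star$ and using $\nabla_x L(z^\star)=0$ gives $\nabla_x L(z) = \nabla^2_{xx}L(z^\star)(x-x^\star) + J^\tran(\lambda-\lambda^\star) + O(\|z-z^\star\|^2)$. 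Now decompose $x-x^\star$ along $\ker J$ and its orthogonal complement. On the orthogonal complement, $J$ is injective, so $\|\nabla_\lambda L\|\gtrsim \|J(x-x^\star)\|\gtrsim$ (norm of that component); on $\ker J$, the condition \eqref{eqn:second-order-suff} gives $v^\tran \nabla^2_{xx}L(z^\star) v \ge \mu\|v\|^2$ for $v\in\ker J$, so projecting $\nabla_x L$ onto $\ker J$ controls that component. Finally the $J^\tran(\lambda-\lambda^\star)$ term, being in $\range(J^\tran)=(\ker J)^\perp$ in the $x$-block, controls the $\range(J)$-component of $\lambda-\lambda^\star$ after noting $\|J^\tran w\|\gtrsim\|w\|$ for $w\in\range(J)$. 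Assembling these three inequalities yields $\|\nabla L(z)\|\ge c\,\dist(z,Z^\star)$ for $z$ in a small neighborhood $X$ of $z^\star$, which is the claimed local error bound.

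\textbf{The main obstacle} is bookkeeping the non-uniqueness of the Lagrange multipliers: one must be careful that $\dist(z,Z^\star)$ only measures the component of $\lambda-\lambda^\star$ that actually affects $\nabla_x L$ (namely its $\range(J)$-part), and that the $\ker(J^\tran)$-part is genuinely free, so that the error bound is not spuriously violated by large but harmless multiplier perturbations. Technically this amounts to working with the reduced Hessian and the pseudoinverse $J^+$ rather than $J^{-1}$, exactly mirroring how Theorem~\ref{thm:matrix-poly-adjoint-2nd-derivative} uses Moore--Penrose inverses, and invoking Corollary~\ref{cor:adjoint-derivative-rank} (when $\A+\Delta\A$ has full rank) to know $J$ has a well-behaved, locally constant rank so the splitting in Step~1 is valid on a whole neighborhood. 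The remaining estimates are the routine quadratic-Taylor-remainder bounds and I would not grind through them.
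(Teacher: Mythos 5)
The paper does not actually give a proof of this theorem: the entire proof is the single sentence \emph{``This result follows immediately from Section~3 of \cite{Wri05} and the references therein.''} So you are not competing against an argument in the paper; you are attempting to reconstruct the argument that the citation delegates. Your sketch is in the right spirit — characterize $Z^\star$ locally, Taylor-expand $\nabla L$ at $z^\star$, split along $\ker J$ and its complement, and assemble a lower bound using the reduced Hessian on $\ker J$ and the injectivity of $J^\tran$ on $\range(J)$ — and you correctly identify the central difficulty (multiplier non-uniqueness, forcing a pseudoinverse/reduced-Hessian treatment). This is essentially what Wright's Section~3 does.

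Two caveats you should be aware of, because they are where the real work lives. First, your Step~1 picture of $Z^\star$ — that it is locally $\{x^\star\}\times(\lambda^\star + \ker J^\tran)$ — is not a consequence of second-order sufficiency alone. Wright's error bound is proved under SOSC \emph{together with} the Mangasarian--Fromovitz constraint qualification (or a comparable condition); without some CQ the multiplier set need not be the affine coset you describe, can be unbounded, and $\dist(z,Z^\star)$ is then uncontrollable by $\|\nabla L(z)\|$. Your appeal to Corollary~\ref{cor:adjoint-derivative-rank} and locally constant rank of $J$ is implicitly importing exactly such a qualification, but this should be stated as a hypothesis, not treated as free. (The paper's theorem statement is equally silent on this, so this is a shared imprecision, not your error.) Second, the assembly in Step~2 is not three independent inequalities: projecting $\nabla_x L$ onto $\ker J$ also picks up a cross-term $P_{\ker J}\,\nabla^2_{xx}L(z^\star)\,u$ where $u$ is the $(\ker J)^\perp$-component of $x-x^\star$, and similarly the multiplier bound is contaminated by $P_{\range(J^\tran)}\,\nabla^2_{xx}L(z^\star)\,(x-x^\star)$. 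One must bound $u$ from $\|\nabla_\lambda L\|$ first, then $v$ from the $\ker J$-projection of $\nabla_x L$ using the SOSC modulus $\mu$, and only then the $\range(J)$-part of $\lambda-\lambda^\star$. This sequential dependency, plus keeping the quadratic Taylor remainders uniformly small on a neighborhood, is what makes the proof more than ``routine.'' With those two points addressed, your plan matches the argument the paper is citing.
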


\begin{proof}
 This result follows immediately from Section 3 of \cite{Wri05} and the 
references therein. 
\end{proof}

The bounds of Wright can be used to infer when quadratic convergence occurs for 
Newton-like methods. In this problem, perturbations to $x$ are important in 
understanding how the problem behaves locally. 
\begin{remark}

Let $z=z(x,\lambda)$
  where $x$ is a vector of variables and
  $\lambda$ is a vector of Lagrange multipliers,
  and define $g(z) = \nabla L(z)$.  First suppose that both the
  second-order sufficiency condition \eqref{eqn:second-order-suff} and
  first-order necessary condition \eqref{eqn:first-order-nec} hold at the
  point $z^\star$.  We can write the first-order expansion
  \[ g(z^\star + \Delta z) = H(z^\star)(\Delta z) + O(\norm{\Delta z}_2^2) 
\approx
    H(z^\star)(\Delta z),\] noting that $g(z^\star)=0$. 
It is
  useful to observe that
  \[ H(z^\star) =
    \begin{pmatrix}
      H_{xx}(z^\star) & J^T(z^\star) \\
      J(z^\star) & 0
    \end{pmatrix}.
  \]
If $\Delta x=0$ then the error-bound from~\cite{Hof52} (main theorem) applies
  and we have that there exists $c_{hof}>0$ such that
  $c_{hof}\norm{\Delta \lambda} \leq \norm{g(x,\lambda+\Delta
    \lambda)}$. If $\Delta x \neq 0$ then
  $\tallnorm{\begin{pmatrix} H_{xx}(z^\star) \\ J(z^\star)\end{pmatrix} \Delta x 
}
  \approx \norm{g(x+\Delta x,\lambda)}$ and \eqref{eqn:second-order-suff}
  implies that $H(z^\star) (\Delta z) = 0 \implies \Delta x=0$, so
  \[\sigma_{\min}\begin{pmatrix} H_{xx}(z^\star) \\ J(z^\star)\end{pmatrix}
    \norm{\Delta x} \lesssim \norm{g(x+\Delta x,\lambda)},\]
    so there exists $c_{\sigma_{\min}}>0$ when $\norm{\Delta x}$ is 
sufficiently small such that $c_{\sigma_{\min}}\norm{\Delta x} \leq 
\norm{g(x+\Delta x,\lambda)}$. Note that $c_{\sigma_{\min}}\approx 
\sigma_{\min}\begin{pmatrix} H_{xx}(z^\star) \\ J(z^\star)\end{pmatrix}.$

The first-order approximation implies that when $\norm{\Delta z}$ is 
sufficiently small that \[g(z^\star+\Delta z)\approx H(z^\star)(\Delta z)  =  
H(z^\star)\begin{pmatrix}
     \Delta x\\
     0
    \end{pmatrix} + H(z^\star)\begin{pmatrix}
     0\\
     \Delta \lambda
    \end{pmatrix} \approx g(x+\Delta x,\lambda) + g(x,\lambda+ \Delta 
\lambda).\]

The key idea is to separate the problem into the cases of
$\Delta x=0$ and $\Delta x \neq 0$, and then derive error bounds for each
case.  The important part of the discussion is that if one can estimate 
$c_{\sigma_{\min}}$ then one can often infer when quadratic convergence occurs.

The second-order 
sufficiency assumption is not necessary to derive
error bounds bounds. It is straightforward to show the local error bound
property holds if $J(z^\star)$ has full rank, as the Lagrange multipliers will 
be (locally) unique, hence the solution is (locally) unique. Alternatively, if 
$J$ had constant rank in a non-trivial open neighborhood around a solution, 
then a similar argument could be made about the local error-bound property. 
\end{remark}

\begin{theorem}\label{thm:second-order-suff-holds}
  The second-order sufficiency condition holds at minimal solutions with
  Lagrange multipliers of minimal norm if $h$ is of maximal degree and
  monic and the minimal structured perturbation $\norm{\Delta \A^\star}$ is
  sufficiently small.
\end{theorem}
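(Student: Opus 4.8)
The plan is to verify the second-order sufficiency condition \eqref{eqn:second-order-suff} directly by analyzing the structure of $\nabla^2_{xx} L$ restricted to $\ker(J(z^\star))$, exploiting the normalization hypotheses (monic $h$ of maximal degree) and the smallness of $\norm{\Delta \A^\star}$. First I would set up coordinates: the variable $x$ collects the free coefficients of $\Delta \A$, of $\Fstar$, and of $h$, and the Lagrange multipliers $\lambda$ are chosen to have minimal norm among all multipliers certifying first-order optimality (such multipliers exist because, by Corollary~\ref{cor:adjoint-derivative-rank} together with the remark following it, the adjoint constraint has a Jacobian of full or locally constant rank in the generic nonzero case). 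Since the objective is $\norm{\Delta \A}_F^2$, its own Hessian in the $\Delta \A$-block is the identity (up to the embedding of the affine structure), which is positive definite there; the only danger comes from the $\Fstar$ and $h$ blocks, where the objective contributes nothing and the Hessian of $\lambda^T\vec(\Adj(\A+\Delta\A)-\Fstar h)$ governs everything.

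The key computation is to show that on $\ker(J(z^\star))$ the bilinear form $\nabla^2_{xx} L$ cannot be negative semidefinite along any direction $\Delta x$ with nonzero $\Delta\Fstar$ or $\Delta h$ component. Here I would use the factorization $\Adj(\A^\star)=\Fstar h$ with $h$ monic of maximal degree: this says $h=\gcd(\Adj(\A^\star))$ up to the normalization, so $\Fstar$ is (coprimely) determined and the map $(\Fstar,h)\mapsto \Fstar h$ is, at this point, an immersion — its differential $h\,\D\Fstar + \Fstar\,\D h$ has trivial kernel precisely because $\Fstar$ and $h$ are coprime (this is the classical fact that the Sylvester/convolution map $(\Fstar,h)\mapsto\Fstar h$ is injective on the tangent space when $\gcd(\Fstar,h)=1$, which is exactly Lemma~\ref{lem:sylvester-finite-detection-strong} / the $\Syl$ full-rank criterion applied to $(\Fstar,h)$). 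Combined with the constraint $\Adj(\A+\Delta\A)=\Fstar h$ linearized at $z^\star$, a nonzero $(\Delta\Fstar,\Delta h)$ forces a nonzero, and by the immersion property uniquely determined, change in $\Adj$, hence — since $J_{\Adj}$ has full rank by Corollary~\ref{cor:adjoint-derivative-rank} — a nonzero $\Delta\A$ in the feasible direction. On that $\Delta\A$ the identity Hessian block gives a strictly positive contribution, while the mixed $\lambda$-terms are $O(\norm{\Delta\A^\star})$ and hence dominated when $\norm{\Delta\A^\star}$ is sufficiently small. The normalization $\mathcal{N}_h\vec(h)=1$ removes the remaining scaling direction in $h$, so no flat direction survives and \eqref{eqn:second-order-suff} holds strictly.

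I would organize the argument in three steps: (i) reduce to showing positive definiteness of $\nabla^2_{xx}L$ on $\ker(J(z^\star))$, isolating the pure-$\Delta\A$ contribution (exactly the identity) and bounding the cross terms involving $\lambda$ by $c\norm{\Delta\A^\star}\norm{\Delta x}^2$ for an explicit $c$ coming from the second derivative of $\Adj$ (controlled via Theorem~\ref{thm:matrix-poly-adjoint-2nd-derivative} and the surrounding perturbation bounds); (ii) show that every nonzero feasible direction $\Delta x\in\ker(J(z^\star))$ has a nonzero $\Delta\A$-component, using the coprimeness of $\Fstar,h$ (from $h$ having maximal degree) to get injectivity of the multiplication differential, the full rank of $J_{\Adj}$ to transfer this to the $\Delta\A$ coordinates, and the normalization to kill the scaling ambiguity; (iii) combine to get $\Delta x^T\nabla^2_{xx}L\,\Delta x \ge (1 - c\norm{\Delta\A^\star})\kappa\norm{\Delta x}^2 > 0$ for $\norm{\Delta\A^\star}$ small, where $\kappa>0$ measures how much $\Delta\A$-mass any unit feasible direction must carry. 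The main obstacle I expect is step (ii): quantifying the constant $\kappa$ — i.e. proving a uniform lower bound on the $\Delta\A$-component of feasible directions — requires knowing that the smallest singular value of the relevant block ($\begin{pmatrix} J_{\Adj} & -\text{(mult.\ by }h) & -\text{(mult.\ by }\Fstar)\end{pmatrix}$ restricted appropriately) is bounded away from zero, which is where the maximal-degree/monic normalization and the coprimeness are genuinely used, and where one must be careful that over-padding with zero coefficients (the source of the rank-deficient Hessian mentioned in Section~\ref{ssec:fast-local-quadratic}) only affects the $\lambda$-block and not this $x$-block.
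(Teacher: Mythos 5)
Your proposal is correct and rests on the same core mechanism as the paper's proof: the coprimeness of $\Fstar$ and $h$ (forced by $h$ being of maximal degree), the normalization $\mathcal{N}_h\vec(h)=1$ to kill the one remaining scaling direction $(\alpha\Fstar,-\alpha h)$ in the multiplication differential, and the observation that the Lagrange multipliers (hence the $\lambda$-dependent blocks of the Hessian) shrink with $\norm{\Delta\A^\star}$. The paper's proof organizes this a bit differently: it first sets $\Delta\A^\star=0$, in which case minimal-norm multipliers give $\lambda^\star=0$ and $H_{xx}=\diag(2I,0,0)$ is PSD; then $\ker(H_{xx})\cap\ker(J)=0$ is established by showing that a direction $y=(0,y_2,y_3)\in\ker(J)$ yields $\Fstar u+vh=0$ with $\mathcal{N}_h\vec(u)=0$, whence $u=v=0$; finally it invokes that $F+2I$ stays invertible for small $\norm{\Delta\A^\star}$. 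Your step (iii) — showing that every feasible direction carries a uniform amount of $\Delta\A$-mass ($\kappa>0$) and bounding the $\lambda$-cross terms by $c\norm{\Delta\A^\star}\norm{\Delta x}^2$ — is in fact a more careful rendering than what the paper writes: once $\lambda\ne 0$ the Hessian block $\begin{pmatrix}0&E\\E^T&0\end{pmatrix}$ is indefinite, so the bare kernel-intersection argument is not enough, and one really does need the quantitative domination you describe.

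Two small caveats worth noting. First, the sentence claiming the differential of $(\Fstar,h)\mapsto\Fstar h$ "has trivial kernel precisely because $\Fstar$ and $h$ are coprime" is not accurate as stated — with $\Fstar$ and $h$ free, the kernel is the one-dimensional span of $(\Fstar,-h)$ — but you immediately repair this by invoking the normalization, so the overall argument is sound; just state the coprimality consequence as a \emph{one}-dimensional kernel that the normalization removes. Second, the appeal to $J_{\Adj}$ having full rank in step (ii) is not actually needed for the direction you use it: once $h\Delta\Fstar+\Fstar\Delta h\neq 0$, the linearized constraint $J_{\Adj}\Delta(\Delta\A)=h\Delta\Fstar+\Fstar\Delta h$ forces $\Delta(\Delta\A)\neq 0$ regardless of the rank of $J_{\Adj}$ (full rank of $J_{\Adj}$ is what the paper uses to argue multipliers \emph{exist}, in the remark after the theorem). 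Neither issue affects the validity of the conclusion.
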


\begin{proof}
  The Hessian of $L$ with respect to $x=x(\Delta A,\Fstar,h)$ is
  \[\nabla^2_{xx}L = H_{xx} =\begin{pmatrix}
      F+2I & & \\
      & & E \\
      & E^T &
    \end{pmatrix},
  \]
  where $F$ is a square matrix with zero diagonal whose entries are a
  multi-linear polynomial in $\lambda$ and $\Delta \A$ and $E^T$ is a
  matrix whose entries are homogeneous linear functions in $\lambda$.

  If $\Delta \A^\star=0$ then $\lambda^\star =0$. Hence both $E=0$ and $F=0$ and so,
  if $y \in \ker(H_{xx}) \cap \ker(J)$ then $y = \begin{pmatrix} 0 & y_2 & y_3
  \end{pmatrix}^T$.  
  The Jacobian of the constraints may be written (up to 
permutation) as 
\[ J = \begin{pmatrix}
      *& \mathcal{C}_{h} & \mathcal{C}_{\Fstar}  \\
      & & \mathcal{N}_{h}
    \end{pmatrix},
  \]
  where $*$ are blocks corresponding to differentiating with respect to
  variables in $\Delta \A$ and the blocks $\mathcal{C}_{\Fstar}$ and
  $\mathcal{C}_{h}$ consist of block convolution and convolution matrices that
  correspond to multiplication by $\Fstar$ and $h$, respectively. The block
  $\mathcal{N}_h$ contains a normalization vector to ensure that $h$ has the 
appropriate degree.
$Jy =0$ implies that there exists a vector of polynomials $v$ and a
  polynomial $u$ with the same degrees as $\Fstar$ and $h$ such that
  $\Fstar u + vh=0$ and $\mathcal{N}_h \vec(u) =0$.

  We have that $h$ is a factor of both $\Fstar u$ and $vh$. Since
  $\gcd(\Fstar,h)=1$ it must be that $h$ is a factor of $u$. It follows
  that $\deg (u)=\deg (h)$, so there exists some $\alpha \neq 0$ such that
  $\alpha u = h$.  Since $h$ is monic, we have that
  $\mathcal{N}_h\vec(h) =1$ but $\mathcal{N}_{h}\vec(u) =0$, which implies that
  $\alpha =0$, and so $u=0$. We have that $vh=0$ and
this implies $v=0$. Hence $\ker(J)\cap \ker(H_{xx}) = 0$ and
  second-order sufficiency holds when $\norm{\Delta \A ^*}=0$.

  If $\norm{\Delta \A ^*}$ is sufficiently small, then $\norm{F}$ will be
  sufficiently small so that $F+2I$ has full rank.  Accordingly, we have
  that
  \[ \ker \begin{pmatrix}
      F+2I & & \\
      & 0& E \\
      & E^T & 0
    \end{pmatrix} \subseteq \ker
    \begin{pmatrix}
      2I & & \\
      & 0&  \\
      & &0
    \end{pmatrix}. \qedhere\]
\end{proof}
We remark that the techniques in the proof are very similar to those of
\cite{ZenDay04} and \cite*{GieHarKal19} to show that a Jacobian matrix
appearing in approximate GCD computations of two (or more) polynomials has
full rank.  If we over-estimated the degrees of $\Fstar$ then $H_{xx}$
would have some columns and rows consisting of zero (the block-convolution
matrices would be padded with extra zero entries).

In the proof of Theorem~\ref{thm:second-order-suff-holds} we note that
\[ \nabla_{xx}^2L = \nabla^2_{xx}\norm{\Delta \A}_F^2 + \nabla_{x} \lambda^T
  J.\] The matrix $F=\nabla_{\Delta \A} \lambda ^T J_{\Adj}(\A+\Delta \A)$
will consist of coefficients of the $(n-3) \times (n-3)$ minors of
$\A+\Delta \A$ scaled by entries of $\lambda$.  Accordingly, $F$ will
generally not have $-2$ as an eigenvalue.

\begin{remark}
Thus far we have assumed that Lagrange multipliers exist at the current
solutions of interest, which are attainable solutions that have full
rank. Corollary~\ref{cor:adjoint-derivative-rank} and the proof of
Theorem~\ref{thm:second-order-suff-holds} imply that Lagrange
multipliers generally exist under these assumptions for several perturbation 
structures, since we need to solve
\[ \begin{pmatrix} 2\vec(\Delta \A)^T & 0
  \end{pmatrix}
  = -\lambda^T J,  \]
  of which $J$ generally has constant or full rank.
Of course if the solution was unattainable then
the GCD constraints would break down as there is a ``solution at infinity'' in 
a sense that $\norm{h}\to \infty$ as $\Delta \A \to \Delta \A^\star$. 
 \end{remark}

The implication of the local-error bound property holding is that one can
reasonably approximate when quadratic convergence occurs by estimating
$\sigma_{\min} \left( \left[ H_{xx} ~|~   J^T\right] \right)$
and $c_{hof}$.  In particular, these quantities act as a structured
condition number on the system. A structured backwards-error analysis of
existing techniques can be performed using these quantities.  Additionally,
it is somewhat generic that $F+2I$ has full rank, hence the local
error-bound will hold for most instances of the approximate SNF problem
with an attainable solution.
It is also important to note that we did not explicitly use the adjoint
matrix. Indeed the result remains valid if we replace the adjoint with
minors of prescribed dimension.  Likewise, if $\A$ is an ill-posed instance
of lower McCoy rank or approximate SNF without an attainable global
minimum, then optimizing over a reversal of each entry of
$\Adj(\A+\Delta \A)$ would yield a non-trivial answer and the same
stability properties would hold. Thus, poorly posed problems also remain
poorly posed if slightly perturbed.

\begin{corollary}
  The LM algorithm for solving $\nabla L=0$ has quadratic convergence under
  the assumptions of Theorem~\ref{thm:second-order-suff} and using
  $\nu_{k} = \norm{\nabla(L(z^k))}_2$.
\end{corollary}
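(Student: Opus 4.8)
The plan is to derive this as a direct consequence of Theorem~\ref{thm:second-order-suff} together with the off-the-shelf convergence theory for the Levenberg--Marquardt (LM) iteration, since the only nontrivial hypothesis needed for the latter --- a local error bound --- is precisely what Theorem~\ref{thm:second-order-suff} supplies. I would begin by fixing the notation $g(z) = \nabla L(z)$ with $z = (x,\lambda)$, and letting $Z^\star$ denote the set of solutions of $g(z) = 0$. By assumption the second-order sufficiency condition \eqref{eqn:second-order-suff} holds at an attainable (regular) minimizer of \eqref{eqn:opt-prob}, so Theorem~\ref{thm:second-order-suff} produces a neighbourhood $X$ of that point and a constant $c>0$ with $c\cdot\dist(z,Z^\star)\le\norm{g(z)}$ for all $z\in X$; i.e.\ $\norm{g}$ is a local error bound on $g(z)=0$ in the sense of Definition~\ref{def:local-error-bound}.

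The next step is to verify the remaining, purely regularity, hypotheses of the LM convergence theorem: that $g$ is continuously differentiable with Jacobian $H=\nabla g=\nabla^2 L$ that is Lipschitz continuous near $Z^\star$. This is immediate from the structure of $L$ --- the term $\norm{\Delta\pmat{A}}_F^2$ is quadratic, the coupling $\Fstar h$ is bilinear, the entries of $\vec(\Adj(\pmat{A}+\Delta\pmat{A}))$ are polynomials in the coefficients of $\Delta\pmat{A}$, and the normalization $\mathcal{N}_h\vec(h)-1$ is affine --- so $L$ is $C^\infty$ and $g$ and $H$ are polynomial maps, hence smooth and locally Lipschitz. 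No quantitative estimates are required at this step.

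With the local error bound and this smoothness in hand, I would then invoke the result of \cite{FanYa05} (refining \cite{YamFuk01}): for the iteration $(H(z^{(k)})^{T}H(z^{(k)})+\nu_k I)\Delta z^{(k)} = -H(z^{(k)})^{T}g(z^{(k)})$, $z^{(k+1)}=z^{(k)}+\Delta z^{(k)}$, if $\nu_k=\norm{g(z^{(k)})}^{\delta}$ for some $\delta\in[1,2]$ then, from any $z^{(0)}$ sufficiently close to $Z^\star$, the iterates remain in $X$, converge to a point of $Z^\star$, and do so quadratically. The prescribed choice $\nu_k=\norm{\nabla L(z^{(k)})}_2=\norm{g(z^{(k)})}_2$ is exactly the boundary case $\delta=1$ (and it coincides with the merit function already used to globalize the method), so all hypotheses are satisfied and quadratic convergence of the LM iteration for $\nabla L=0$ follows.

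I expect the substantive points to be organizational rather than analytic. One must (i) check that the admissible range of the damping exponent in the cited theorem genuinely includes the endpoint $\delta=1$ --- it does, whereas $\delta<1$ would only yield superlinear convergence; (ii) note that $Z^\star$ need not be a single point, since, as observed after Theorem~\ref{thm:second-order-suff-holds}, the Lagrange multipliers may fail to be unique --- this causes no difficulty because the cited analysis is stated for a solution \emph{set} under a local error bound rather than for an isolated nonsingular root; and (iii) recall that even when $H_{xx}$ (hence $H$) is rank deficient the LM step is still well defined, as $H^{T}H+\nu_k I\succ0$, so no separate treatment is needed. Finally, it should be stated explicitly that the conclusion is local and pertains only to attainable regular solutions: near an unattainable solution the GCD constraints degenerate ($\norm{h}\to\infty$) and both Theorem~\ref{thm:second-order-suff} and this corollary cease to apply.
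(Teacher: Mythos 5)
Your proof is correct and takes essentially the same route as the paper's (which is just three sentences: $\nabla L$ is polynomial hence locally Lipschitz, second-order sufficiency gives the local error bound via Theorem~\ref{thm:second-order-suff}, and the LM convergence theory of \citet{YamFuk01} and \citet{FanYa05} then applies). Your write-up is considerably more explicit about which hypotheses of the Fan--Yuan theorem are being verified, why the exponent $\delta=1$ falls within the admissible range, and why non-uniqueness of multipliers and rank deficiency of $H$ cause no trouble; these are all correct elaborations of steps the paper leaves implicit.
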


\begin{proof}
  The quantity $\nabla L$ is a multivariate polynomial, hence it is locally
  Lipschitz. Second-order sufficiency holds, thus we have the local error
  bound property is satisfied.  The method converges rapidly with a
  suitable initial guess.
\end{proof}
Note that for several perturbation structures if the 
adjoint has generic degrees, then the Jacobian of the constraints will have 
full rank, and a 
standard Newton iteration is also well-defined, and will converge quadratically 
as well.

In the next section we discuss a technique that possibly forgoes rapid
local convergence, but has a polynomial per iteration cost to compute a low
McCoy rank approximation. 

\subsection{\mbox{Computational Challenges and Initial Guesses}}

The most glaring problem in deriving a fast iterative algorithm for the
approximate Smith form problem is that the
matrix $\Adj(\pmat{A}+\Delta \pmat{A})$ has exponentially many coefficients
as a multivariate polynomial in $\Delta \pmat{A}$.  This means computing
the adjoint matrix symbolically as an ansatz is not feasible.  In order to
solve \eqref{eqn:first-order-nec} we instead approximate the derivatives of
the coefficients of the adjoint numerically.

To compute an initial guess, we can use $\Delta \A _{init}=0$ and take
$\Fstar$ and $h$ to be a reasonable approximation to an approximate GCD of
$\Adj(\A)$, which will often be valid as per
Theorem~\ref{thm:second-order-suff}. To make sure the point is feasible, one 
can use a variant of Newton's method to project to a feasible point. 
Corollary~\ref{cor:adjoint-derivative-rank} implies that with a suitable 
initial guess, reasonable variants of Newton's method (such as LM) will 
converge quadratically to a feasible point, assuming one exists.

Another technique is to take two rows or columns of $\A$ and perturb them
so that the $2n$ entries have a non-trivial GCD. To find the best guess
with this technique, $O(n^2)$
approximate GCD
computations on $O(n)$ polynomials of degree $d$ need to be performed.
In the next section we will discuss more sophisticated techniques.

 \subsection{Attaining Unattainable Solutions}\label{sec:unattainable-soln}
 
 If a solution is unattainable then the degrees of all the entries of the
 adjoint matrix may change in an open neighborhood around a solution. If
 $\Delta \A^\star$ is an unattainable solution (of full rank) to
 \eqref{eqn:opt-prob} then $h(t)=t$ is clearly not a solution since $h(t)=t$ 
being a
 solution implies that such a solution would be attainable.  Let $d_{\Adj}$ be
 the generic degree of $\Adj(\A+\Delta \A)$, then $t$ is a factor of
 $\gcd(\rev_{d_{\Adj}} (\Adj(\A+\Delta \A^\star)))$. The reversed adjoint
 has no GCD at infinity by assumption, as such a GCD at infinity would be
 an attainable solution to the original problem.  Accordingly, we note
 that Theorem~\ref{thm:second-order-suff-holds} applies after some
 straightforward modifications, since
 \[
   \nabla \vec(\Adj(\A+\Delta \A)) \text{~~and~~} \nabla
   \vec(\rev_{d_{\Adj}}(\Adj(\A+\Delta\A)))
 \]
 are essentially (block) permutations of each other.

 Since $\rev_{d_{\Adj}}(\Adj(\A+\Delta\A)))$ achieves the generic degree,
 Lagrange multipliers should exist as we can apply
 Corollary~\ref{cor:adjoint-derivative-rank} on
 $\nabla \vec(\rev_{d_{\Adj}}(\Adj(\A+\Delta\A)))$ by permuting entries,
 and the underlying approximate GCD problem is well-posed. Thus the problem
 will also typically admit Lagrange multipliers.

 The essential ingredient in Theorem~\ref{thm:second-order-suff-holds} is
 the normalization of the underlying approximate GCD problem.
This means that ``backwards stable'' algorithms will compute the exact SNF
 of a nearby matrix polynomial that has no meaning in the context of
 computation. This generally occurs because the radius of uncertainty,
 usually proportional to unit rounding errors, contains infinitely many
 matrix polynomials with a non-trivial SNF.  The backwards stability is not
 meaningful in this context, because the instance of the problem is not
 continuous.  In such instances, computing the SNF is most likely the wrong 
\todo{how do I fix this?}
 problem to be considering. Instead, computing the spectral structure of 
eigenvalues at
 infinity is most likely the appropriate problem. However there exist
 instances where both problems could be simultaneously poorly conditioned.

 If the reversed problem has a radius of stability with respect to
 Theorem~\ref{thm:second-order-suff-holds}, then the original problem has a
 radius of instability, meaning that the iterates will converge to a point
 where $\norm{h}$ is excessively large.  In other words, if an instance of
 a problem is ill-posed, then it cannot be regularized --- the finite and
 infinite eigenvalues and their spectral structure is indistinguishable in
 floating point arithmetic --- in the context of the QZ decomposition,
 GUPTRI \citep{DemKaa93,DemKaa93b} or similar algorithms. There are some
 instances where attempting to compute the SNF numerically is not possible
 and should not be attempted.  In the context of an optimization problem,
 we can of course regularize the problem as we have just described. \cite{Van79} suggests that ill-posed problems should be formulated
 as an optimization problem as a means of regularization to overcome some
 of the numerical difficulties.

 \section{Lower McCoy Rank Approximation}
\label{sec:optII-linearization}

In this section we describe how to find a nearby matrix polynomial of lower
McCoy. 
Another way to formulate $\A$ having a non-trivial SNF is to solve the
minimization problem 
\begin{equation} \label{eqn:snf-via-linearization}
  \min
  \norm{\Delta \A}_F^2
  ~~~~~\mbox{subject to}~~~~~
  \parbox{16em}{$\bigl(\A (\omega) +\Delta \A(\omega)\bigr)B = 0$ ~and~ $B^*B = I_2$,\\[4pt]
    for some $\omega\in\CC$ and $B\in\CC^{n\times 2}$,}
\end{equation}
where $\Delta\A$ must have the appropriate structure.  Essentially
this finds the smallest perturbation of $\A$ with an eigenvalue that lowers the
rank by at least $2$. The auxiliary variables $\omega$ 
and $B$ are used to enforce this constraint. Here $B^*$ 
is the conjugate transpose of $B$, and
$B^*B=I_2$ ensures that the kernel vectors are linearly independent and 
do not tend towards zero.

The optimization is unstable if $\omega$ is reasonably large, since the
largest terms appearing are proportional to
$O((d+1)\norm{\A}_\infty |\omega|^d)$. To remedy this, if we assume that
a solution to the optimization problem \eqref{eqn:snf-via-linearization} 
exists and has full rank, then we may transform $\A+\Delta \A$ into a 
degree-one matrix polynomial (also known as a matrix pencil) with the same 
spectral 
properties, known as a {\emph{linearization}}. 
If there is no full-rank solution one can simply take a lower-rank
approximation \citep*{GieHarLab19} and extract a square matrix polynomial of 
full rank that may be linearized.  Alternatively, one may forgo the 
linearization and work directly with a problem that is more poorly conditioned. 
For the rest of this section we will assume, without loss of generality, that 
$\A$ and the 
solutions to the low McCoy rank problem have full rank. \todo{changed this}

We can encode the spectral structure and SNF of 
$\A$ as the following degree-one matrix polynomial (sometimes referred to as 
the \emph{companion linearization}~\citep{GolLanRod09})
of the form $\pmat{P}\in \RR[t]^{nd \times nd} $, defined as
\[ \pmat{P} = \begin{pmatrix}
    I \\
    & \ddots \\
    & & A_d
  \end{pmatrix} t -
  \begin{pmatrix}
    & I \\
    &   & \ddots \\
    -A_0 & -A_1 & \cdots & -A_{d-1}
  \end{pmatrix} .
\]
This particular linearization encodes the SNF of $\A$, as
$\SNF(\pmat{P}) = \diag(I,I,\ldots, I, \SNF(\A))$.  It follows that $\A$ has a non-trivial SNF if and only if $\pmat{P}$ has a non-trivial
SNF. If we preserve the affine structure of $\pmat{P}$ and only perturb
blocks corresponding to $\A$, then the reduction to a pencil will be
sufficient. Other linearizations are possible as well. The pencil is
generally better behaved numerically since the largest entry upon
evaluation at a $\omega\in\CC$ is proportional to
$O(d\norm{\A}_{\infty}|\omega|)$ rather than
$O(\norm{A}_{\infty}|\omega|^d)$, albeit with matrices that are $d$ times larger.

\subsection{Fast Low McCoy Rank via Optimization}

One way to approach the lower McCoy rank approximation problem is to study all
the minors (or sufficiently many) of a matrix polynomial. This method 
immediately generalizes from the previous section, however is not practical 
for computational purposes since the number of minors grows exponentially  in 
the dimension.  Instead, we can approach the problem by formulating it as an 
optimization problem, one that is remarkably similar to structured lower rank 
approximation of  scalar matrices.  This similarity facilitates computing an 
initial guess for the following optimization problem using the SVD. 

The lower McCoy rank approximation problem may be formulated as the
following \emph{real optimization problem}: to find the nearest matrix
polynomial to $\A\in\RR[t]^\nxn$ with McCoy rank $n-r$, find the
perturbation $\Delta\A\in\RR[t]^\nxn$ which minimizes
\begin{equation}
  \label{eqn:low-mccoy-rank-opt}
  \min \norm{\Delta \A}_F^2 \text{~~subject to~~}
  \begin{cases}
    \Re((\pmat{P} + \Delta \pmat{P})(\omega)B) =0, \\
    \Im((\pmat{P} + \Delta  \pmat{P})(\omega)B) = 0, \\
    \Re(B^*B) = I_r, \\
    \Im(B^*B) = 0
  \end{cases}
  ~~~~\text{for some $\omega\in\CC$ and $B\in\CC^{nd\times r}$.}
\end{equation}
Note that the perturbation $\Delta \A$ is {\emph{real valued}} in this
problem. The unitary constraint on $B$ ensures that $\rank(B) =r$ and each
column of $B$ remains away from zero. Accordingly, $\omega \in \CC$ will be
an eigenvalue of $(\P+\Delta \P)(\omega)$ since
$\rank((\P+\Delta \P)(\omega)) \leq nd-r$, and thus the McCoy rank of
$\A+\Delta \A$ is at-most $n-r$.

Real matrix polynomials can have complex eigenvalues and so complex numbers must 
necessarily appear in the constraints.  The constraints arising from the complex 
numbers may be divided   into real parts and imaginary parts, denoted as 
$\Re(\cdot)$ and $\Im(\cdot)$, respectively. By dividing the constraint into real 
and imaginary parts, we are able to solve an equivalent  optimization problem 
completely with real variables.   This ensures that $\Im(\Delta \A) =0$, that is, the 
perturbations are real. 
Since $\A+\Delta \A$ may have complex eigenvalues (but entries with real 
coefficients), we require that $\SNF(\A+\Delta \A)$  has entries from 
$\RR[t]$. Accordingly, we need to interpret the auxiliary variable $\omega$. 
The instance of $\Im(\omega) =0$ corresponds to $t-\omega$ as an invariant 
factor, while $\Im(\omega)\neq 0$ corresponds to the real irreducible quadratic 
$(t-\omega)(t- \overline{\omega})$.  Thus at a solution, we are able to recover 
a real invariant factor regardless if $\omega$ has a non-zero imaginary part. 

In order to approach the problem using the method of Lagrange multipliers
we define the Lagrangian as
\[ L = \norm{\Delta \A}_F^2 + \lambda^T\vec\begin{pmatrix}
    \Re((\pmat{P}+\Delta \pmat{P})(\omega)B)\\
    \Im((\pmat{P} + \Delta
    \pmat{P})(\omega)B) \\
    \Re(B^*B) - I_r \\
    \Im(B^*B) \end{pmatrix},
\]
and proceed to solve $\nabla L = 0$.  In our implementation we again make
use of the LM method, although given the relatively cheap gradient cost, a
first-order method will often be sufficient and faster.  The problem is
essentially tri-linear, and structurally similar to affinely structured low rank
approximation, of which Lagrange multipliers will exist for most instances.

It is important to note that an attainable solution to this problem is not
guaranteed, as it is possible for $\norm{\omega}\to \infty$ as
$\Delta \A \to \Delta \A^{\star}$.  Such an instance is an unattainable
solution in the context of Section~\ref{sec:unattainable-soln}. These
solutions behave like an infinite eigenvalue and can be handed by specifically 
considering the eigenvalue $t=0$  of the reversed matrix polynomial. 

\subsection{Computing an Initial Guess}
In order to compute an initial guess to \eqref{eqn:low-mccoy-rank-opt} we 
exploit the pseudo tri-linearity of the
problem. If two of $\Delta \A$, $\omega$ and $B$ are fixed then the problem is 
linear (or a linear surrogate can be solved) in the other variable. Despite the 
unitary constraint on $B$ 
being non-linear, it is not challenging to handle. Any full rank  $B$ is 
suitable for an initial guess, since we may orthonormalize $B$ to satisfy the 
constraint that $B^*B=I_r$.

First we approximate 
the determinant of $\A$ and consider initial
guesses where $\sigma_{n-r}(\A(\omega^{init}))$ is reasonably small. If 
$\sigma_{n-r}(\A(\omega^{init}))$ is reasonably small, then $\omega^{init}$ is 
(approximately) an eigenvalue of a nearby matrix polynomial of reduced McCoy 
rank. The zeros and local extrema of $\det(\A)$ are suitable candidates for 
computing an initial guess for $\omega$. The kernel $B^{init}$ can be 
approximated from the smallest $r$ singular vectors of $\A(\omega^{init})$. 
This ensures that $B^{init}$ is unitary and spans the kernel of a nearby rank 
deficient (scalar) matrix.

To compute an initial guess for $\Delta \A$ we can take $\Delta \A^{init} =0$, 
or solve a linear least squares problem where $B$ and $\omega$ are fixed. 
Alternatively, one may project to a feasible point by using a variant of 
Newton's method, using $\Delta \A^{init}=0$, $\omega^{init}$ and $B^{init}$ as 
an initial guess for the Newton iteration to solve $(\A+\Delta \A)(\omega)B 
=0$ and $B^*B = I_r$.  A feasible point computed by Newton's method tends not 
to perturb $\Delta \A$ very much, whereas the least squares approximation may 
perturb $\A$  by an unnecessarily large amount.

\begin{comment}
\subsection{Convergence and Prescribed Spectral Structure}
The linearization may converge to a solution where the invariant factors
are reducible quadratics or degree larger than two. Accordingly, the rate
of convergence will generally be linear with a first-order method and
super-linear (but not always quadratic) with reasonable quasi-Newton
methods.  To obtain a prescribed spectral structure one simply adds
constraints of the form \eqref{eqn:low-mccoy-rank-opt} in conjunction with
a ``staircase form'' constraint \citep{VanDew83} to force invariant factors
to be repeated or have higher degree. We defer discussion to 
Section~\ref{sec:theory-prescribed-spectral}. 
\end{comment}

\subsection{About Global Optimization Methods}
The problems previously discussed are NP hard to solve exactly and to approximate with
coefficients from $\QQ$. This follows since affinely structured low rank
approximation \citep{BraYouDoyMor94,PolRoh93} is a special case. If we
consider a matrix polynomial of degree zero, then this is a scalar matrix
with an affine structure.  The approximate SNF will be a matrix of rank at
most $n-2$, and finding the nearest affinely structured singular matrix is
NP hard.

Despite the problem being intractable in the worst case, not all instances
are necessarily hard. The formulation \eqref{eqn:low-mccoy-rank-opt} is
multi-linear and polynomial, hence amenable to the sum of squares
hierarchy.  Lasserre's sum of squares hierarchy \citep{Las01} is a global
framework for polynomial optimization that asymptotically approximates a
lower bound.  Accordingly, if $\norm{\omega^{opt}}$ is bounded, then sum of
squares techniques should yield insight into the problem.

 \section{Implementation and Examples}
\label{sec:implementation}

We have implemented our algorithms and techniques in the
\texttt{Maple} computer algebra system\footnote{Sample code is at
\url{https://www.scg.uwaterloo.ca/software/GHL2018jsc-code-2018-11-28.tgz}.
}.
We use the variant of Levenberg-Marquardt discussed in
Section~\ref{sec:opt1-adjoint} in several instances to solve the first-order
necessary condition.  All computations are done using hardware precision
and measured in floating point operations, or FLOPs.  The input size of our
problem is measured in the dimension and degree of $\A$, which are $n$ and
$d$ respectively.  The cost of most quasi-Newton methods is roughly
proportional to inverting the Hessian matrix, which is $O(\ell^3),$ where
$\ell$ is the number of variables in the problem.

The method of Section~\ref{sec:opt1-adjoint} requires approximately
$O( (n^3d)^3) = O(n^{9}d^3)$ FLOPs per iteration in an asymptotically optimal 
implementation with cubic matrix inversion,  which is the cost of inverting the 
Hessian. Computing the 
Hessian costs roughly $\softO(n^4 d^2\times (n^2)^2 ) = \softO(n^8 d^2)$ FLOPs\footnote{For $\phi,\psi\colon \mathbb{R}\to\mathbb{R}$, $\phi=\softO(\psi)$ iff
$\phi=O(\psi (\log|\psi|)^c)$ for some absolute constant $c\geq 0$, i.e.,
we ignore log factors.}
using a blocking procedure, assuming the adjoint computation runs in  
$\softO(n^4d)$ FLOPs (which can be done via interpolation in a 
straightforward manner)There are $O(n^3d)$ Lagrange multipliers since the 
adjoint has degree at most $(n-1)d$.  Using reverse-mode automatic 
differentiation to compute $\nabla^2L$, this can be accomplished in 
$\softO(n^4d \times n^3d) = \softO(n^7 d^2)$ FLOPs.

The method of Section~\ref{sec:optII-linearization} has a Hessian matrix of
size $O(n^2d^2) \times O(n^2 d^2)$ in the case of a rank zero McCoy rank
approximation. Accordingly, the per iteration cost is roughly
$O(n^{6}d^{6})$ FLOPs. If the linearization is not performed, then the
per-iteration cost is $O(n^6d^3)$ FLOPs.  Given the lack of expensive
adjoint computation, a first-order method will typically require several
orders of magnitude fewer FLOPs per iteration (ignoring the initial setup
cost), with local linear convergence.

\begin{example}[Nearest Interesting SNF]
  Consider the matrix polynomial $\A$ with a trivial SNF
  \[
    \begin{pmatrix} {t}^{2}+ .1 t+1&0& .3 t- .1&0\\
      0& .9 {t}^{2}+ .2 t+ 1.3&0& .1
      \\   .2 t&0&{t}^{2}+ 1.32+ .03 {t}^{3}&0\\
      0& .1 {t}^{2}+ 1.2&0& .89 {t}^{2}+ .89
    \end {pmatrix}
  \]
  of the form $\diag(1,\ldots,1,\det(\A))$.
  
  If we prescribe the perturbations to leave zero coefficients unchanged,
  then using the methods of Section~\ref{sec:opt1-adjoint} and
  Section~\ref{sec:optII-linearization} results in  a local minimizer
  $\A+\Delta \A_{opt}$ given by
  \[\scalemath{.9}{ \begin{pmatrix} 1.0619 {t}^{2}+ .018349 t+ .94098& 0&
        .27477 t- .077901& 0\\ 0& .90268 {t}^{2}+ .22581 t+ 1.2955& 0&
        .058333 \\ .13670 t& 0& .027758 {t}^{3}+ .97840 {t}^{2}+ 1.3422&
        0\\ 0& .10285 {t}^{2}+ 1.1977& 0& .84057 {t}^{2}+
        .93694 \end{pmatrix} },\]
  with $\norm{\Delta \A_{opt}}\approx .164813183138322$.  The SNF of
  $\A+\Delta\A_{opt}$ is approximately
  \[\diag(1, 1,s_1,s_1( t^5 + 35.388 t^4 + 6.4540 t^3 + 99.542 t^2 + 5.6777 t
    + 70.015)),\] where
  $s_1 \approx t^2 + 0.0632934647739423t + 0.960572576466186$.
  The factor $s_1$
  corresponds to
  $\omega_{opt}\approx -0.0316467323869714 - 0.979576980535687 i$.

  The method discussed in Section~\ref{sec:opt1-adjoint} converges to
  approximately $14$ decimal points of accuracy\footnote{$\nabla L=0$ is
    solved to $14$ digits of accuracy; the extracted quantities are accurate
    to approximately the same amount.} after $69$ iterations and the method
  of Section~\ref{sec:optII-linearization} converges to the same precision
  after approximately $34$ iterations.  The initial guess used in both instances
  was $ \Delta\A_{init}=0$. The initial guesses of $\Fstar$ and $h$ were
  computed by an approximate GCD routine. For the initial guess of $\omega$
  we chose a root or local extrema of $\det(\A)$ that minimized the
  second-smallest singular value of $\A(\omega)$, one of which is
  $\omega_{init}\approx -.12793 - 1.0223 i$.
\end{example}

\begin{example}[Lowest McCoy Rank Approximation]
  Let $\A$ be as in the previous example and consider the $0$-McCoy rank
  approximation problem with the same prescribed perturbation structure.

  In this case we compute a local minimizer $\A+\Delta \A_{opt}$ given by
  \[
    \scalemath{1}{
      \begin{pmatrix}.80863 {t}^{2}+ 1.1362& 0& 0& 0\\
        0& .91673 {t}^{2}+1.2881& 0& 0\\   0& 0& .95980 {t}^{2}+1.3486 & 0\\
        0& .60052 {t}^{2}+ .84378& 0& .71968 {t}^{2}+ 1.0112
      \end{pmatrix}},
  \]
  with $\norm{\Delta \A_{opt}}\approx .824645447014665$ after $34$ iterations
  to $14$ decimal points of accuracy.  We compute
  $\omega_{opt}\approx - 1.18536618732372 i$ which corresponds to the
  single invariant factor $s_1 \approx t^2 + 1.4051$. The SNF of
  $\A+\Delta \A_{opt}$ is of the form $(s_1,s_1,s_1,s_1)$.
  
\end{example}

 \section{Conclusion and Topics for Future Research}

In this paper we have shown that the problem of computing a nearby matrix polynomial with a 
non-trivial spectral structure can be solved by (mostly local) optimization techniques. 
Regularity conditions were shown to hold for most instances of the 
problems in question, ensuring that Lagrange multipliers exist under mild 
assumptions about the solutions. When Lagrange multipliers do not exist, 
alternative formulations that admit Lagrange  multipliers have been proposed.  
Several of these algorithms are shown to be theoretically  robust with a 
suitable initial guess. In general, reasonable quasi-Newton methods will have 
rapid local convergence under normalization assumptions for all the problems 
considered. 

There are a number of problems that remain open for future work. In particular
in the case of nearby nontrivial Smith forms there is the question of obtaining
such forms via polynomial row and column operations, that is, finding the unimodular matrix
multipliers that will produce our nearest Smith form. Preliminary work on this topic,
including the formulation as an optimization problem and the proving of the existence 
of Lagrange multipliers for the optimization can be found in the thesis of \citet{Har19}.
In some cases it may be practical to prescribe the degree structure, 
also called the {\emph{structural supports}}, of the eigenvalues or the 
invariant factors of a nearby matrix polynomial. In this case, rather than look for 
a closest non-trivial SNF one would be interested in a closest SNF having a particular degree structure.
As before this can be formulated as an optimization problem with early results available in \citep{Har19}.

\bibliographystyle{elsarticle-harv.bst}

\newcommand{\Gathen}{\relax}

\end{document}